\def\eqref#1{equation~\ref{#1}}
\def\1{\bm{1}}
\DeclareMathAlphabet{\mathsfit}{\encodingdefault}{\sfdefault}{m}{sl}
\SetMathAlphabet{\mathsfit}{bold}{\encodingdefault}{\sfdefault}{bx}{n}
\newcommand{\E}{\mathbb{E}}
\DeclareMathOperator*{\argmax}{arg\,max}
\DeclareMathOperator*{\argmin}{arg\,min}
\newcommand*\xbar[1]{%
  \hbox{%
    \vbox{%
      \hrule height 0.5pt 
      \kern0.5ex
      \hbox{%
        \kern-0.00em
        \ensuremath{#1}%
        \kern-0.1em
      }%
    }%
  }%
}
\newtheorem{lemma}{Lemma}
\newtheorem{theorem}{Theorem}
\newtheorem{prop}{Proposition}
\newtheorem{definition}{Definition}
\DeclareMathOperator{\Pb}{\mathbb{P}}
\newcommand{\exloss}{L}
\newcommand{\D}{\mathcal{D}}
\newcommand{\Df}{\mathfrak{D}}
\newcommand{\Ss}{\mathcal{S}}
\newcommand{\X}{\mathcal{X}}
\newcommand{\bx}{\bm{x}}
\newcommand{\bb}{\bm{b}}
\newcommand{\bs}{\bm{s}}
\newcommand{\bX}{\bm{X}}
\newcommand{\bp}{\bm{p}}
\newcommand{\Rhat}{\hat{R}}
\newcommand{\bv}{\bm{v}}
\newcommand{\gp}{g^{+}}
\newcommand{\kmax}{k_{max}}
\newcommand{\bz}{\bm{z}}
\newcommand{\blambda}{\bm{\lambda}}
\newcommand{\bmu}{\bm{\mu}}
\newcommand{\bto}{\bm{\omega}_0}
\newcommand{\bxo}{\bm{x}_0}
\newcommand{\bbx}{\xbar{\bm{x}}}
\newcommand{\bbt}{\xbar{\bm{\omega}}}
\newcommand{\reals}{\mathbb{R}}
\newcommand{\by}{\bm{y}}
\newcommand{\NE}{\text{NE}}
\newcommand{\bomega}{\bm{\omega}}
\newcommand{\fo}{f_{\bomega}}
\newcommand{\fop}{f_{\bomega'}}
\newcommand{\N}{\mathcal{N}}
\newcommand{\I}{\mathcal{I}}
\newcommand{\J}{\mathcal{J}}
\newcommand{\Cvr}{\N^{\zeta}_{\bomega}}
\newcommand{\squishlist}{
   \begin{list}{$\bullet$}
    { \setlength{\itemsep}{0pt}      \setlength{\parsep}{3pt}
      \setlength{\topsep}{3pt}       \setlength{\partopsep}{0pt}
      \setlength{\leftmargin}{1.5em} \setlength{\labelwidth}{1em}
      \setlength{\labelsep}{0.5em} } }
\newcommand{\squishend}{  \end{list}  }
\begin{document}
\title{Strategic Classification with Externalities}
\author{        
        Safwan Hossain\footnote{Equal Contribution} \\ 
        Harvard University \\ 
        \texttt{shossain@g.harvard.edu}\vspace{0.7em}
        \and
        Evi Micha$^*$ \\ 
        University of Southern California\\ 
        \texttt{pmicha@usc.edu} 
        \and
        Yiling Chen \\ 
        Harvard University \\ 
        \texttt{yiling@seas.harvard.edu}
        \and
        Ariel Procaccia \\ 
        Harvard University\\ 
        \texttt{arielpro@seas.harvard.edu} 
}
\date{}

\maketitle

\begin{abstract}
We propose a new variant of the strategic classification problem: a principal reveals a classifier, and $n$ agents report their (possibly manipulated) features to be classified. Motivated by real-world applications, our model crucially allows the manipulation of one agent to affect another; that is, it explicitly captures inter-agent externalities. The principal-agent interactions are formally modeled as a Stackelberg game, with the resulting agent manipulation dynamics captured as a simultaneous game. We show that under certain assumptions, the pure Nash Equilibrium of this agent manipulation game is unique and can be efficiently computed. Leveraging this result, PAC learning guarantees are established for the learner: informally, we show that it is possible to learn classifiers that minimize loss on the distribution, even when a random number of agents are manipulating their way to a pure Nash Equilibrium. We also comment on the optimization of such classifiers through gradient-based approaches. This work sets the theoretical foundations for a more realistic analysis of classifiers that are robust against multiple strategic actors interacting in a common environment.
\end{abstract}

\maketitle

\section{Introduction}
Machine learning algorithms are increasingly deployed in high-stakes decision making, including loan applications, school admissions, hiring, and insurance claims~\citep{bejarano2021machine, harwell2022face, kumar2022equalizing}. Relying on past data as a reference, these algorithms use features of a candidate to determine their merit for the given task. The interactions in these settings, however, often involve \emph{strategic agents} who may manipulate their features if doing so yields a more favorable outcome. This presents a significant challenge to the algorithm if it is not trained to anticipate such behavior since the training and test distributions no longer match. Correspondingly, a large and growing literature on \emph{strategic classification} has emerged to understand the dynamics of this behavior and propose strategies for learning in such settings \citep{hardt2016strategic}.

By and large, the existing literature roughly models the interaction as follows: The learner (or principal) deploys a classifier, and an agent with feature $\bx$ observes this and may choose to report a manipulated feature $\bx'$ to obtain a better outcome. Manipulation is not considered free since it may involve some additional effort or risk. While the classifier may be deployed for any number of agents, agent interactions with the learner are crucially assumed to be independent; in other words, one agent's action has no influence on another. We posit that this ignores a critical aspect of multi-agent interactions in a shared environment: agents' actions exert \emph{externality} on one another.  This is not a new observation: indeed, the notion that agents in a system are affected by a common resource has been a core component of economic models for over a century \citep{pigou1920economics, coase1960problem}. It has not, however, been formally studied in the strategic classification setting despite its relevance and applicability. 

Consider the example \citet{hardt2016strategic} used to initiate this literature: the number of books in the parent's household may be a valid feature for university admissions given its correlation with student success~\citep{evans2010family}. Since ``books are relatively cheap'', this allows parents to game the system by buying ``an attic full of unread books''\citep{hardt2016strategic}. It is immediate, however, that if all parents do this, then the price of books no longer stays cheap. This is the externality induced by everyone's actions. Externality, however, can not only model demand for lucrative features but also the additional risk or burden associated with manipulation. When a university is deciding which students to accept from a high school using a classification algorithm, applicants may report features such as class rank, GPA, volunteering affiliations, and so on. Naturally, only a handful of students can be at the top of their class or be affiliated with a specific organization. If only a handful of students manipulate their features, it may not be statistically discernible. However, if many claim to be at the top of their class or work with the same organization, this can stand out and lead the university to audit the applications further. Similarly, a few applicants inflating their credit score on a loan application may be treated as an anomaly; a large pool of applicants doing so becomes a systematic problem that demands a reexamination of the loan process, causing a burden on all.


The preceding examples illustrate that the demand for books, the risk faced by a student, or the burden on the borrower is not solely determined by an individual's action but also influenced by the interactions of others within the same system. This naturally affects how, what, and if an agent manipulates. This phenomenon applies to most settings where strategic classification is pertinent. As such, it is imperative to correctly model and understand this phenomenon to ensure that classifiers deployed in sensitive settings act as intended. We specifically study the following questions:
\squishlist
    \item How should we jointly model the strategic interactions between agents alongside those with the learner? 
    \item What is the appropriate notion of equilibrium in this setting and what properties does it have?
    \item What learning guarantees can be given for classifiers in such settings?  
\squishend

\subsection{Our Contributions}
We study the problem of deploying classifiers in strategic multi-agent settings with inter-agent externalities from a theoretical perspective. Consistent with prior literature, the interaction between the learner and the agents is modeled as a Stackelberg game: the learner first commits to a classifier to which agents can respond. The resulting inter-agent interactions are captured as a simultaneous game, and the \emph{Stackelberg-Nash Equilibrium} is proposed as the solution concept for all interactions. We precisely define these aspects of the \emph{multi-agent strategic classification game} in Section \ref{sec:model}. Learning in this setting is challenging, not least due to the possible multiplicity of equilibrium, their computation, and their dynamics due to a changing classifier. Section \ref{sec:equi_computation} motivates a set of structural assumptions on the cost and externality, under which the inter-agent simultaneous game has a unique pure Nash Equilibrium that is efficiently computable. Building on this insight, Section \ref{sec:learning_generalization} comments on the regularity of the Nash Equilibrium under changing classifiers, and provides probably approximately correct (PAC) learning guarantees for computing the Stackelberg-Nash Equilibrium. Intuitively, it is possible to learn loss-minimizing classifiers that generalize to a random number of agents manipulating to achieve a Nash Equilibrium. We also differentiate through the equilibrium solution to explicitly characterize the loss gradient, illustrating the feasibility of gradient-based optimization algorithms. Section \ref{sec:cost_and_externality} uses our motivating examples to model some externality functions that can be captured by our general framework. Lastly, in Section \ref{sec:discussion}, we discuss some limitations of our work alongside technical and conceptual extensions.


\subsection{Related Work}

Our work builds on the growing literature on strategic classification~\citep{hardt2016strategic, bruckner2012static, bruckner2009nash}. In the basic setting, a learner seeks to release a classifier that accounts for the fact that strategic agents may misreport their true features to maximize their utility, which is determined by the likelihood of a positive outcome and the cost incurred for misreporting. 

Recent papers have studied extensions of the basic strategic classification setting. For example, \cite{levanon2022generalized} introduce a setting where agents' utility functions capture different intentions beyond simply maximizing for the positive outcome. Others aim to 
account for limited information~\citep{ghalme2021strategic, harris2022strategic, bechavod2022information}, unknown utilities~\citep{dong2018strategic} and  causal effects~\citep{miller2020strategic,horowitz2023causal}. These extensions do not handle externalities or multi-agent behaviour in general.

While motivated by a different aspect of strategic classification, the work of ~\citet{eilat2022strategic}, which studies strategic classification in a graph setting, has conceptual similarities with our work. Node classification on graphs naturally depends on neighboring nodes' features, which strategic agents can exploit. While this implicitly models agent interactions as not wholly independent, there are several fundamental differences from our work. First, agents in their model are interrelated due to the classifier outcome of one agent depending on neighboring nodes; in contrast, we consider the classifier outcome for an agent to only depend on their feature, with the externality explicitly capturing additional risk or cost to agents due to others also interacting in the system. This better aligns with our motivation to capture classification dynamics in competitive high-stakes settings. Furthermore, they consider agents myopically best responding over a sequence of rounds whereas we consider the performance of a classifier at a Nash Equilibrium. 



Further afield, machine learning researchers have investigated the effect of strategic behavior on social welfare~\citep{milli2019social, haghtalab2020maximizing, kleinberg2020classifiers} or on different groups of agents~\citep{hu2019disparate}. However, none of these papers consider the direct impact that manipulation by others can have on each individual agent. The impact of such externalities has, nonetheless, been studied in computational problems like auctions~\citep{agarwal2020towards}, data markets~\citep{chen2023equilibrium},  facility location~\citep{li2019facility}, and  for long-term fairness accountability~\citep{heidari2019long}. 

\section{Model}\label{sec:model}
\paragraph{Preliminaries:} For $k \in \mathbb{N}$, let $[k] = \{1, \dots, k\}$. We consider $k$ strategic agents interacting with a learner who releases a  classifier parametrized by weights $\bomega$. Using one of our running examples, this corresponds to $k$ students applying for admission to a university, which decides to accept or reject using classifier $f_{\bomega}$. Let $\bx_i \in \reals^{d}$ and $y_i \in \{-1, +1\}$ denote the feature vector and true class of agent $i$ respectively, with $(\bx_i, y_i) \sim \D$. Let $\bX \in \reals^{k \times d}$ denote the $k$ agents' feature matrix, and $\by \in \{-1, +1\}^k$ the vector of their true class labels\footnote{In general, we use bold capital letters to refer to matrices, and bold lower case letters to refer to vectors.}, sampled independently from $\D$; we use the shorthand 
$$(\bX, \by) \sim \underbrace{\D \times \dots \times \D}_{k} \triangleq \D^k.$$ 
We use the notation $\bX_{i:}$ and $\bX_{:j}$ to respectively refer to the $i^{th}$ row and $j^{th}$ column of the matrix $\bX$, and $[\bX_1 ; \bX_2]$ to denote two matrices concatenated along rows.

Our focus will be on binary linear classifiers due to their wide-spread practical usage and popularity within the strategic classification literature\footnote{Our results can be generalized to multi-class linear classifiers.} \citep{hardt2016strategic, dong2018strategic, bechavod2022information}; that is, $\bomega \in \Omega \subseteq \reals^d$ and $f_{\bomega}(\bx) = \langle \bx, \bomega \rangle$ denotes the score for positive classification. The learner may use any Lipschitz loss function\footnote{This includes nearly any reasonable loss function including Hinge Loss, Cross Entropy Loss, MSE, etc.} to compute the loss and maximize the accuracy of their deployed classifier with respect to the true labels. 


\paragraph{Utility and Externality:} Following the standard strategic classification model, we consider a Stackelberg interaction between the learner and the agents \citep{hardt2016strategic}. That is, the learner first releases a classifier $f_{\bomega}$, and thereafter, agents submit their features to be classified. An agent need not be truthful and may instead submit a manipulated feature vector $\bx'_i$ to receive a higher score for the positive class. Consistent with the strategic classification literature~\citep{dong2018strategic, bechavod2022information}, we assume agent utilities are proportional to the score: $\fo(\bx'_i)g^+(\bx_i)$, where $\gp(\bx_i): \reals^d \rightarrow \reals$ is an agent specific gain for positive classification\footnote{It may be common to associate a type vector $\theta_i$ with an agent and express gain in terms of $\gp(\theta_i, \bx_i)$. This is equivalent to our utility model as the feature vector can include the type. This type of dependence can be extended to our cost and externality functions as well.}.

In line with prior work, we assume agent features lie within a bounded region (without loss of generality $[0,1]^d$) and they incur a \emph{cost} $c(\bx_i, \bx'_i)$ in modifying their true feature vector. However, in contrast to these earlier models, we also consider agents' decisions causing externality to others. Conceptually, externality is the impact agents' actions have on one another when interacting within a common system. Formally, the negative externality suffered by agent $i$ due to an agent $j$ is captured by the function $t(\bx_i, \bx'_i, \bx_j, \bx'_j)$. We use $T(\bx_i, \bx'_i, \bX, \bX') = \sum_{j \ne i}{t(\bx_i, \bx'_i, \bx_j, \bx'_j)}$ to denote the total externality suffered by agent $i$ due to all other participating agents' decision. In summary, the total utility of an agent $i$ in reporting $\bx'_i$ in our multi-agent strategic classification game is given by:
\begin{equation}\label{eq:utility}
    u_i(\bX, \bX', f_{\bomega}) = f_{\bomega}(\bx'_i)\gp(\bx_i) - c(\bx_i, \bx'_i) - T(\bx_i, \bx'_i, \bX, \bX')
\end{equation}
The standard literature on strategic classification assumes the principal knows the cost function \citep{hardt2016strategic, milli2019social, levanon2021strategic}. We extend this assumption to externality functions as well. 

\paragraph{Game-Theoretic Model:} Since an agent's total utility depends on others' reports, the interaction between the agents is naturally modeled as a game. Taking inspiration from prior work on strategic regression~\citep{chen2018strategyproof, hossain2021effect} and related settings~\citep{nakamura2015one}, we model this inter-agent interaction as a simultaneous game. Correspondingly, the \emph{best response} of agent $i$ given classifier $f_\omega$ and reports of others is $\argmax_{\bx'_i \in [0,1]^d}{u_i(\bX, \bX', f_{\bomega})}$.

For a fixed classifier $f_{\bomega}$, the natural solution concept for the agent game is a \emph{pure Nash Equilibrium} (PNE). Formally, a set of reported values $\bX^q = [\bx^q_i; \dots; \bx^q_i]$ is a PNE if for each agent $i$ and fixing the reported values of others, $\bx_i^q$ is a best response for agent $i$. Intuitively, no agent can unilaterally improve their total utility at this equilibrium. Since PNE is not necessarily unique, let $\NE(\bX, f_{\bomega})$  denote the correspondence from the true features to the set of features reported at an equilibrium under classifier $\fo$. That is: $\bX^q \in \NE(\bX, f_{\bomega})$.


\paragraph{Learning Problem:} While the relationship between the $k$ agents corresponds to a simultaneous game, the principal and agent interactions still induce a Stackelberg game since the principal first commits to a classifier $f_{\bomega}$. The Stackelberg equilibrium in the classical setting is with respect to a single agent best responding under the released classifier; however, our multi-agent setting implies the principal must now react according to the Nash Equilibrium induced by their chosen classifier. Thus, the principal's goal is to compute the \emph{Stackelberg-Nash Equilibrium}~\citep{xu2016methodology}, which corresponds to a classifier that performs optimally when sampled features reach a pure Nash Equilibrium with respect to the chosen classifier. 

Since the number of participants in the classification setting (e.g., the number of students applying for admissions) is not fixed, let $\kmax$ denote the maximum number of such participants, and let $\mathcal{K}$ denote a distribution over $[\kmax]$. The principal has access to a dataset of $n$ labeled samples $\Ss = \{(\bX_1, \by_1, k_1), \dots, (\bX_n, \by_n, k_n)\}$, where sample $j$ consists of $k_j \sim \mathcal{K}$ and $(\bX_j, \by_j) \sim \mathcal{D}^{k_j}$. That is, the $j^{th}$ samples consist of the features and labels of the $k_j$ participating agents. The principal uses this training set $\Ss$ to learn a classifier that performs well in practice. The optimal classifier is formally defined as follows:

\begin{definition}
    The optimal classifier parameter $\bomega^*$ in the multi-agent strategic classification game with loss function $\ell$, participant distribution $\mathcal{K}$, and data distribution $\mathcal{D}$ is given by: 
    \begin{equation*}\label{eq:nash_stackelberg_eq}
        \bomega^* = \argmin_{\bomega \in \Omega}{\E_{k \sim \mathcal{K}}\E_{(\bX,\bm{y}) \sim \D^k}}\max_{\bX^q \in \mathrm{NE}(\bX, f_{\bomega})}{\left[\sum_{i=1}^{k}\ell(y_i, f_{\bomega}(\bm{x^q}_i))\right]}.
    \end{equation*}
\end{definition}

The learning problem in this Stackelberg-Nash game is non-trivial for several reasons. First, it requires optimizing over the PNE of the sampled feature vectors. We discuss the efficient computation of this equilibrium in Section \ref{sec:equi_computation} under structural assumptions on the cost and externality. Second, we must not only generalize with respect to the equilibrium of sampled agents but also when the number of such agents is random. We address this and the optimization of $f_{\bomega}$ from a dataset $\Ss$ in Section \ref{sec:learning_generalization}.

\section{Equilibrium Properties}\label{sec:equi_computation}
Core to our problem is computing the PNE achieved by the sampled agents since it informs the loss suffered by the principal under their chosen classifier. Thus, understanding equilibrium properties such as existence, uniqueness, and computation is crucial to any learning algorithm. However, such properties cannot be precisely stated without a structured model of the agent utilities, which in our game is largely predicated upon the cost and externality. We formally state our assumptions below:

\begin{enumerate}
    \item The cost is given by the $\ell_2$ norm of the manipulation vector: $c(\bx_i, \bx_i) = \alpha||\bx_i - \bx'_i||_2^2$
    \item Externalities are pairwise symmetric: $\forall i,j,\  t(\bx_i, \bx'_i, \bx_j, \bx'_j) = t(\bx_j, \bx'_j, \bx_i, \bx'_i)$.
    \item Total externality faced by any agent $i$, $T(\bx_i, \bx'_i, \bX, \bX')$ is smooth and convex in the manipulation variables $(\bx'_1, \dots, \bx'_k)$.
\end{enumerate}

Initial work on strategic classification, including \citet{hardt2016strategic}, assumed the cost function to be separable, a somewhat restrictive assumption. More recent work models cost as the norm of the manipulation vector $\bx' - \bx$ as it satisfies natural axioms associated with being a metric \citep{levanon2022generalized, horowitz2023causal}. The most common choice is the $\ell_2$ norm, which we adopt (our results do hold for a larger class of cost functions - see the end of this section). 

Structural assumptions on externality models are routine in economic literature. In competitive settings, which describe many high-stakes classification tasks, symmetric externalities are natural and commonplace \citep{goeree2005not, chen2023equilibrium}. Settings such as data markets~\citep{agarwal2020towards} and facility location~\citep{li2019facility} further capture externality under a linear model, which our model subsumes. To give an example of a non-linear but convex externality, consider the loan application setting where a bank (the learner) is screening candidates for approval. The more candidates manipulate their reports, the more likely the bank's approval process will become stricter, adversely affecting all candidates. The function $t(\bx_i, \bx'_i, \bx_j, \bx'_j) = (||\bx'_i - \bx_i||_2 + ||\bx'_j - \bx_j||_2)^2$, which increases as the manipulation increases can model this setting. It is convex since the square of a non-negative convex function is convex. While this is one example, Section \ref{sec:cost_and_externality} contains a more detailed exploration of externality models, and shows that our results hold for a broader class of functions, including non-convex ones. 

Returning to the PNE analysis of the manipulation game, we first show in Theorem \ref{theorem:potential_game} that when externalities are pairwise symmetric, the resulting interactions can be characterized by a potential game\footnote{Appendix \ref{app:A} also shows a type of global externality can also be modeled as a potential game.}. Such games use a single function, the potential function $\Phi$, to encapsulate the difference in utility to any agent $i$ due to their unilateral change in action \citep{monderer1996potential}. We formally define this below for convenience.

\begin{definition}
    An $n$ player simultaneous game is a \emph{potential game} if there exists a \emph{potential} function $\Phi: \mathcal{A}^n \rightarrow \reals$ mapping from joint action space to the reals such that for any agent $i$: $u_i(a_i, \bm{a}_{-i}) - u_i(a'_i, \bm{a}_{-i}) = \Phi(a_i, \bm{a}_{-i}) - \Phi(a'_i, \bm{a}_{-i})$, where $u_i$ is the $i^{th}$ agent's utility.
\end{definition}

\begin{theorem}\label{theorem:potential_game}
    For any classifier $f_{\bomega}$, if the externalities are pairwise symmetric (Condition 2), the agent interactions constitute a potential game with the potential function:
    \begin{equation*}\label{eq:potential_func}
        \Phi(\bX, \bX', f_{\bomega}) = \sum_{i=1}^{k}{f_{\bomega}(\bx'_i)\gp(\bx_i)} - \sum_{i=1}^{k}{c(\bx_i, \bx'_i)} - \sum_{i=1}^{k}\sum_{j > i}{t(\bx_i, \bx'_i, \bx_j, \bx'_j)}.
    \end{equation*}
\end{theorem}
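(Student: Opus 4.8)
The plan is to verify the defining identity of a potential game directly. Fixing the true features $\bX$, the classifier $\fo$, and all other agents' reports $\bX'_{-i}$, I let agent $i$ switch its report from $\bx'_i$ to an arbitrary alternative $\bx''_i$, and show that the induced change in $u_i$ equals the induced change in $\Phi$. Since the claimed potential $\Phi$ is a single function independent of $i$, establishing this one identity for an arbitrary agent $i$ and arbitrary pair of reports is exactly what the definition of a potential game requires.

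First I would expand $u_i(\bx'_i, \bX'_{-i}) - u_i(\bx''_i, \bX'_{-i})$ using Equation~\eqref{eq:utility}. Because only $\bx'_i$ varies, the score term contributes $\gp(\bx_i)[\fo(\bx'_i) - \fo(\bx''_i)]$, the cost term contributes $-[c(\bx_i, \bx'_i) - c(\bx_i, \bx''_i)]$, and the externality term contributes $-\sum_{j \ne i}[t(\bx_i, \bx'_i, \bx_j, \bx'_j) - t(\bx_i, \bx''_i, \bx_j, \bx'_j)]$, since $T$ is a one-sided sum over $j \ne i$ with the reports $\bx'_j$ of the other agents held fixed.

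Next I would expand $\Phi(\bx'_i, \bX'_{-i}) - \Phi(\bx''_i, \bX'_{-i})$ term by term. The score and cost sums are separable across agents, so only their $i$-th summands survive, reproducing the score and cost differences above verbatim. The delicate piece is the pairwise externality sum $\sum_{m}\sum_{l > m} t(\cdot)$: varying $\bx'_i$ affects every unordered pair containing $i$, which splits into pairs $\{i, l\}$ with $l > i$ (agent $i$ occupying the first argument slot) and pairs $\{m, i\}$ with $m < i$ (agent $i$ occupying the second slot). The main obstacle — and the only place the hypothesis enters — is reconciling these two groups: the $m < i$ terms appear as $t(\bx_m, \bx'_m, \bx_i, \cdot)$ with agent $i$ in the second position, whereas the utility difference places agent $i$ in the first position throughout. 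Invoking pairwise symmetry (Condition 2), $t(\bx_m, \bx'_m, \bx_i, \bx'_i) = t(\bx_i, \bx'_i, \bx_m, \bx'_m)$, rewrites each $m < i$ term with $i$ in the first slot; the two groups then merge into the single sum $\sum_{j \ne i}[t(\bx_i, \bx'_i, \bx_j, \bx'_j) - t(\bx_i, \bx''_i, \bx_j, \bx'_j)]$, which is precisely the externality difference obtained in the utility computation.

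Finally, matching the three pieces shows the utility difference equals the potential difference, so $\Phi$ is a valid potential and the game is a potential game. I expect the argument to be essentially clean bookkeeping; the sole conceptual step is the symmetry-based reindexing, without which the one-sided potential sum over $j > i$ could not reproduce the two-sided externality $\sum_{j \ne i}$ that each agent actually faces.
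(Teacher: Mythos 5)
Your proof is correct and follows essentially the same route as the paper's: a direct verification that a unilateral deviation by agent $i$ changes $u_i$ and $\Phi$ by the same amount, with pairwise symmetry (Condition 2) used to reconcile the one-sided pair sum $\sum_{i}\sum_{j>i} t(\cdot)$ in $\Phi$ with the two-sided externality $\sum_{j \ne i} t(\cdot)$ in the utility. Your explicit split of the pairs containing $i$ into the $l > i$ and $m < i$ groups is just a more detailed writing-out of the reindexing the paper handles in one sentence.
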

\begin{proof}
    A potential function encapsulates the change in utility to any agent $i$ due to their unilateral deviation. For us to claim that the proposed function $\Phi$ as a potential function, the following must hold:
    \begin{equation*}
        \forall i: u_i(\bX, [\bX'_{-i}; \bx'_i], f_{\bomega} ) - u_i(\bX, [\bX'_{-i}; \bx''_i], f_{\bomega}) = \Phi(\bX, [\bX'_{-i} ; \bx'_i] , f_{\bomega}) -  \Phi(\bX, [\bX'_{-i} ; \bx''_i] , f_{\bomega})
    \end{equation*}
    Observe that the expression on the left is given by: 
    \begin{equation*}
        [f_{\bomega}(\bx'_i) - f_{\bomega}(\bx''_i)]\gp(\bx_i) - [c(\bx'_i, \bx_i) - c(\bx''_i, \bx_i)] - \sum_{j \ne i}{t(\bx_i, \bx'_i, \bx_j, \bx'_j) - t(\bx_i, \bx''_i, \bx_j, \bx'_j)}
    \end{equation*}
    We claim this is equivalent to $\Phi(\bX, [\bX'_{-i} ; \bx'_i] , f_{\bomega}) -  \Phi(\bX, [\bX'_{-i} ; \bx''_i] , f_{\bomega})$. Since only $\bx'_i$ changes to $\bx''_i$, terms involving gain and cost in the potential function difference match the above. Similarly, all externalities terms not involving $\bx'_i$ remain the same and only the externalities involving agent $i$ remain. There are exactly $k-1$ such terms and due to symmetry, this is equivalent to $\sum_{j \ne i}{t(\bx_i, \bx'_i, \bx_j, \bx'_j) - t(\bx_i, \bx''_i, \bx_j, \bx'_j)}$. 
\end{proof}

Since all player incentives map to the potential function, PNE strategies exactly correspond to local maxima of the potential function \citep{roughgarden2010algorithmic}.\footnote{In this setting, a local maxima refers to a point that is optimal with respect to changes in any one direction.} Indeed, if an agent could improve their utility by modifying their action, the value of the potential function at this updated action also increases. This has several implications. First, finding an equilibrium is equivalent to finding such local maxima, and second, properties of local maxima of $\Phi$ directly inform the multiplicity/uniqueness of the equilibria. Lastly, suppose agents sequentially play their best response from any starting point. In that case, the potential function value is strictly increasing, giving a simple local algorithm for arriving at a PNE. 

Our next result shows that the potential function is strictly concave under our set of assumptions. For such functions, any local optimum is also the global optimum, such an optimum is unique, and the PNE must be at this optimum~\citep{neyman1997correlated}. Correspondingly, the Nash Equilibrium is unique and can be written as the outcome of a convex optimization problem\footnote{Maximizing a concave objective under convex constraints, is a convex optimization problem.}. We present this formally below:
\begin{theorem}\label{cor:unique_equilibrium}
    Under Conditions 1, 2, and 3, the potential function in Theorem \ref{theorem:potential_game} is strictly concave. Further, the Nash Equilibrium is unique and is given by the function, 
   $  \argmax_{\bX' \in [0,1]^{k \times d}}{\Phi(\bX, \bX', f_{\bomega}).}  $
\end{theorem}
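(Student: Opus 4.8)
The plan is to establish strict concavity of $\Phi$ as a function of the manipulation variables $\bX' \in \reals^{k \times d}$ (with $\bX$ and $\fo$ held fixed), and then invoke the structural facts already noted in the text: PNE strategies correspond to local maxima of the potential function, and a strictly concave function on a convex set has a unique maximizer which is also its unique local maximum. First I would examine the three summands of $\Phi$ term by term. The gain term $\sum_i \fo(\bx'_i)\gp(\bx_i) = \sum_i \langle \bx'_i, \bomega\rangle \gp(\bx_i)$ is \emph{linear} in $\bX'$ (since $\gp(\bx_i)$ is a fixed scalar once $\bx_i$ is fixed), hence contributes nothing to the Hessian. The cost term, under Condition 1, is $-\alpha\sum_i \|\bx_i - \bx'_i\|_2^2$, whose Hessian in $\bX'$ is $-2\alpha I$, i.e.\ strictly negative definite. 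The externality term $-\sum_i\sum_{j>i} t(\bx_i,\bx'_i,\bx_j,\bx'_j)$ I would handle via Condition 3: the total externality $T(\bx_i,\bx'_i,\bX,\bX') = \sum_{j\ne i} t(\cdots)$ is convex in the manipulation variables, so $\sum_i T = 2\sum_i\sum_{j>i} t$ (by the symmetric double-counting) is convex, meaning the externality contribution to $\Phi$ is concave and contributes a negative semidefinite block to the Hessian.

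Putting these together, the Hessian of $\Phi$ in $\bX'$ is the sum of a strictly negative definite matrix (from the cost) and two negative semidefinite matrices (zero from the linear gain, negative semidefinite from the convex externality). A negative definite matrix plus negative semidefinite matrices remains negative definite, so $\Phi$ is strictly concave in $\bX'$. I would state this cleanly as: for any $\bv \ne 0$, $\bv^\top \nabla^2_{\bX'}\Phi\, \bv \le -2\alpha\|\bv\|_2^2 < 0$.

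Given strict concavity on the compact convex domain $[0,1]^{k\times d}$, the remaining steps are standard. A strictly concave continuous function on a nonempty compact convex set attains its maximum at a unique point. Moreover, strict concavity implies every local maximum is the global maximum, so the unique maximizer is the only local maximum. By the potential-game correspondence (established after Theorem~\ref{theorem:potential_game}, citing \citet{neyman1997correlated}), PNE exactly coincide with local maxima of $\Phi$; hence there is a unique PNE, and it is precisely $\argmax_{\bX' \in [0,1]^{k\times d}} \Phi(\bX,\bX',\fo)$, which is a well-defined (single-valued) function since the maximizer is unique. Existence of a PNE follows because the maximizer exists by compactness.

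The main obstacle I anticipate is the externality Hessian step — making rigorous the passage from ``$T$ is convex for each fixed $i$'' (Condition 3) to ``$\sum_i\sum_{j>i} t$ is convex in the joint variable $\bX'$.'' The subtlety is that Condition 3 is stated per-agent ($T(\bx_i,\cdots)$ convex in $(\bx'_1,\dots,\bx'_k)$), and one must confirm that summing these and using pairwise symmetry yields $\sum_i T = 2\sum_{i<j} t$ as functions of the \emph{full} manipulation profile, so that convexity transfers to the symmetric half-sum appearing in $\Phi$. Once the bookkeeping of the factor of $2$ and the shared dependence on overlapping coordinate blocks is handled carefully, the argument reduces to the elementary fact that a sum of convex functions is convex; the genuine strictness is supplied entirely by the cost term, so the externality only needs to be convex (not strictly so), which is exactly what Condition 3 guarantees.
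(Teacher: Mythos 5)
Your proposal is correct and follows essentially the same route as the paper's proof: both use pairwise symmetry to identify $\sum_i T = 2\sum_{i<j} t$ (so Condition 3 transfers convexity to the half-sum in $\Phi$), combine this with the linear gain term and the strictly convex $\ell_2$ cost to get strict concavity, and then conclude uniqueness of the PNE from the standard fact that a strictly concave function on a convex set has a unique (local = global) maximizer. Your Hessian-based phrasing and the explicit remark on existence via compactness are just slightly more explicit renderings of the same argument.
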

\begin{proof}
    Due to pairwise symmetry condition, we know that $$\sum_{i=1}^{k}T(\bx_i, \bx'_i, \bX, \bX') = 2\sum_{i=1}^{k}\sum_{j > i}{t(\bx_i, \bx'_i, \bx_j, \bx'_j)}.$$ Hence, if each $T(\bx_i, \bx'_i, \bX, \bX')$ is convex, then $\sum_{i=1}^{k}\sum_{j > i}{t(\bx_i, \bx'_i, \bx_j, \bx'_j)}$, the last term in the potential function, is also convex. Combined with the $\ell_2$ norm square cost, which is strictly convex, and $\fo$ which is linear, it immediately follows that that the potential function is strictly concave. The optimization program for the Nash Equilibrium follows immediately from this.
    
    For concave optimization problems, any local optima is a global optima. Suppose by contradiction there are two global optima  $\bX', \bX''$, with $\Phi(\bX, \bX', \fo) = \Phi(\bX, \bX'', \fo) = m$. Then strict concavity implies: $\Phi(\bX, \lambda \bX' + (1-\lambda)\bX'', \fo) > \lambda \Phi(\bX, \bX', \fo) + (1-\lambda)\Phi(\bX, \bX'', \fo) = m$. Since the feasible region is convex, the points on the $\lambda \bX' + (1-\lambda)\bX''$ are feasible and thus neither $\bX'$ or $\bX''$ could be maximizers. 
\end{proof}

This implies that $\NE(\bX, f_{\bomega})$ is now a function mapping true value to the unique PNE. This resolves the ambiguity of learning in the presence of multiple equilibria. Further, this function is the outcome of a parameterized convex optimization problem (in terms of the classifier $f_{\bomega}$). 

The result above hinges on the strict concavity of the potential function. While our stated conditions imply this, we note that Conditions 1 and 3 can be replaced with a weaker one:
\begin{enumerate}\label{enum:general}
    \item[\emph{1'.}] \emph{$\sum_{i}{c(\bx_i, \bx'_i)} + \sum_{i}\sum_{j > i}{t(\bx_i, \bx'_i, \bx_j, \bx'_j)}$, is smooth and strictly convex}
\end{enumerate}
We denote this as the \emph{cumulative impact of manipulation} --- total cost and the sum of all combinations of externality. Due to our pairwise symmetry condition, $\sum_{i}\sum_{j > i}{t(\bx_i, \bx'_i, \bx_j, \bx'_j)} = \tfrac{1}{2}\sum_{i=1}^{n}T(\bx_i, \bx'_i, \bX, \bX')$; since convexity is preserved through summation, for any strictly convex cost (i.e. Condition 1, $\ell_2$ norm squared cost) and any externality where $T(\bx_i, \bx'_i, \bX, \bX')$ is convex (Condition 3) the updated condition above is immediately satisfied. This condition however is more general and allows us to capture a larger class of possibly non-convex externality within our framework. We have a detailed discussion of some of these richer models in Section \ref{sec:cost_and_externality}.

\section{Learning and Generalization}\label{sec:learning_generalization}
We now tackle the problem of learnability in our setting. Broadly speaking, the learner's goal is to ensure that a classifier trained on a finite dataset to anticipate agents misreporting to an equilibrium, performs well on the broader population distribution which exhibits similar behaviour. A unique aspect of our setting is the number of agents participating in a given instance need not be fixed. Taking university admissions as an example, the number of applicants in a given year is not constant but rather a random variable. Our generalization result should accommodate this variability in the number of players in each instance, alongside the equilibrium they reach. 

Recall from Section \ref{sec:model} that $\kmax$ denotes the largest number of simultaneous participants with $\mathcal{K}$, a distribution over $[\kmax]$, and $\D$, the data distribution that we sample from, i.e., $(\bx_i, y_i) \sim \D$. The learner has access to a training dataset of $n$ instances, where each instance involves a sample $k_i \sim \mathcal{K}$, and then $k_i$ independent samples from the distribution $\D$. Observe that the following is an equivalent sampling procedure: sample $k_i \sim \mathcal{K}$, draw $\kmax$ independent samples from $\mathcal{D}$, and then only consider the first $k_i$ elements. We use this latter procedure for the generalization result; formally, we define a joint distribution $\mathfrak{D} = \underbrace{\mathcal{D} \times \dots \times \mathcal{D}}_{\kmax} \times \mathcal{K}$. The learner has access to $n$ samples from this distribution, representing their training set: $\Ss = \{(\bX_1, \by_1, k_1), \dots, (\bX_n, \by_n, k_n)\}$, where $\bX_i \in \reals^{\kmax \times d}$, $\by_i \in \reals^{\kmax}$, and $k_i \in [\kmax]$. Note that each sample/instance in this dataset can be seen as holding the attributes of $k_i$ individuals. 

For any chosen classifier $\fo$, the participating agents reach a PNE and the classifier suffers a resulting loss based on this. In the preceding section, we show this equilibrium to be the solution of a convex optimization problem. To provide formal learning guarantees, however, it is important to understand how the outcome of this optimization (i.e. the PNE) behaves as the classifier changes. Indeed, if the PNE drastically changes due to small changes in $\bomega$, learning can be challenging. 

Let $\NE(\bX, \fo, k) : \reals^{\kmax \times d} \times \reals^d \times [\kmax] \rightarrow \reals^{\kmax \times d}$, where the first $k$ rows of the output correspond to the equilibrium solution when $k$ agents are participating. We prove in the following result that the potential function optimum (and thus the PNE strategy) is Lipschitz continuous in $\bomega$. The theorem statement treats inputs to $\Phi$ and $\NE$ as vectors. This is without loss of generality since our $\bX \in \reals^{\kmax \times d}$ feature matrix is equivalent to a $\bx \in \reals^{d\kmax}$ vector, with the vector $\ell_2$ norm $||\bx||_2$ equivalent to the matrix Frobenius norm $||\bX||_F$.

\begin{lemma}\label{theorem:lipschitz_general}
    Let $\Phi(\bx, \bx', \fo)$ be the potential function for $\kmax$ agents ($\bx, \bx' \in \reals^{d\kmax}$). Then for $\eta = \tfrac{\gamma}{c}$, where $c = \tfrac{1}{2}\min_{\bx', \bomega}(\lambda_{min}(\nabla^2_{\bx'}\Phi(\bx, \bx', \fo)))$ and $\gamma = \max_{\bx', \bomega}{||\nabla^2_{\bomega \bx'}\Phi(\bx, \bx', \fo)||} + 1$, the function $\NE(\bx, \fo, k)$ is $\eta$-Lipschitz in $\bomega$ (under the $||\cdot||_2$ norm) for any $k$.
\end{lemma}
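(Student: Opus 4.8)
The plan is to treat the PNE as the unique maximizer of the strongly concave potential $\Phi$ over the convex box $[0,1]^{d\kmax}$ (guaranteed by Theorem \ref{cor:unique_equilibrium}) and to carry out a sensitivity analysis of this maximizer with respect to the parameter $\bomega$. Writing $\bx'^*(\bomega) = \argmax_{\bx' \in [0,1]^{d\kmax}} \Phi(\bx, \bx', \fo)$, the first-order optimality condition for concave maximization over a convex set is the variational inequality $\langle \nabla_{\bx'}\Phi(\bx, \bx'^*(\bomega), \fo),\, \bx' - \bx'^*(\bomega)\rangle \le 0$ for all feasible $\bx'$. I would favour this formulation over a direct implicit-function-theorem computation precisely because it encodes the box constraints automatically, without having to track which coordinates are active at the optimum.

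The core of the argument is a strong-monotonicity estimate. I would fix two parameters $\bomega_1, \bomega_2$ with corresponding maximizers $\bx'_1, \bx'_2$, instantiate the variational inequality for $\bomega_1$ at the test point $\bx'_2$ and for $\bomega_2$ at the test point $\bx'_1$, and add the two inequalities to obtain
$$\langle \nabla_{\bx'}\Phi(\bx,\bx'_2, f_{\bomega_2}) - \nabla_{\bx'}\Phi(\bx,\bx'_1, f_{\bomega_1}),\, \bx'_2 - \bx'_1\rangle \ge 0.$$
I would then split the gradient difference by inserting the cross term $\nabla_{\bx'}\Phi(\bx,\bx'_1, f_{\bomega_2})$. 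The pure-$\bx'$ part is controlled by strong concavity: interpreting $c$ as half the strong-concavity modulus, i.e. $-\nabla^2_{\bx'}\Phi \succeq 2c\,\mathbf{I}$, one has $\langle \nabla_{\bx'}\Phi(\bx,\bx'_2, f_{\bomega_2}) - \nabla_{\bx'}\Phi(\bx,\bx'_1, f_{\bomega_2}),\, \bx'_2 - \bx'_1\rangle \le -2c\,\|\bx'_2 - \bx'_1\|_2^2$. The pure-$\bomega$ part is controlled by the mixed Hessian: the fundamental theorem of calculus along the segment from $\bomega_1$ to $\bomega_2$ together with $\|\nabla^2_{\bomega \bx'}\Phi\| \le \gamma$ gives $\|\nabla_{\bx'}\Phi(\bx,\bx'_1, f_{\bomega_2}) - \nabla_{\bx'}\Phi(\bx,\bx'_1, f_{\bomega_1})\|_2 \le \gamma\,\|\bomega_2 - \bomega_1\|_2$. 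Combining these via Cauchy--Schwarz yields $2c\,\|\bx'_2 - \bx'_1\|_2^2 \le \gamma\,\|\bomega_2 - \bomega_1\|_2\,\|\bx'_2 - \bx'_1\|_2$, hence $\|\bx'_2 - \bx'_1\|_2 \le \tfrac{\gamma}{2c}\,\|\bomega_2 - \bomega_1\|_2 \le \eta\,\|\bomega_2 - \bomega_1\|_2$, the claimed bound (the factor-of-two and the $+1$ in $\gamma$ leaving slack to spare).

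Two points would remain. For a general number of agents $k \le \kmax$, the $k$-agent equilibrium is the maximizer of the corresponding $k$-agent potential, whose relevant Hessian quantities are dominated by those of the full $\kmax$-agent $\Phi$ used to define $c$ and $\gamma$; thus the same $\eta$ is valid uniformly in $k$, and restricting to the first $k$ rows of the output only decreases distances. The main obstacle I anticipate is bookkeeping rather than depth: pinning down the sign convention of the negative-definite Hessian $\nabla^2_{\bx'}\Phi$ so that the strong-concavity modulus matches the stated constant $c$, and justifying the strong-concavity inequality uniformly over the box (which follows from the uniform eigenvalue bound defining $c$, itself guaranteed by the strict concavity in Theorem \ref{cor:unique_equilibrium}). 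Routing the argument through the variational inequality rather than the implicit function theorem is exactly what lets me sidestep the more delicate task of differentiating through active constraints on the boundary of $[0,1]^{d\kmax}$.
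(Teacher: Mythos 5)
Your proof is correct, and it takes a genuinely different route from the paper's. The paper argues from first principles: it invokes Berge's maximum theorem for continuity of the maximizer, derives a quadratic-growth inequality $\Phi(\bx^*,\bomega)-\Phi(\bx',\bomega) \geq c\|\bx'-\bx^*\|_2^2$ at the optimum via a Taylor expansion and the uniform eigenvalue bound, and then runs a delicate \emph{local} argument (level-set containment, continuity in $\bomega$, and the mean value theorem applied to the difference $\Psi(\cdot,\bomega_1)-\Psi(\cdot,\bomega_2)$) to obtain local Lipschitzness, finally patching local estimates along line segments. You instead instantiate the first-order optimality condition (variational inequality) at \emph{both} parameters, add the two inequalities, split the gradient difference with a cross term, and control the two pieces by uniform strong concavity and the mixed-Hessian bound; this is the standard strong-monotonicity sensitivity argument, and it delivers the global bound in a few lines with the sharper constant $\tfrac{\gamma}{2c} \leq \eta$, with the box constraints absorbed automatically by the variational inequality. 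What the paper's longer route buys is self-containedness; what yours buys is brevity, a directly global estimate, and no appeal to Berge's theorem. Two caveats, both shared with the paper: (i) as you note, $c$ must be read as half the strong-concavity modulus, i.e.\ $-\nabla^2_{\bx'}\Phi \succeq 2c\,\mathbf{I}$, since the literal $\lambda_{min}$ of a negative-definite Hessian is the eigenvalue of \emph{largest} magnitude; and (ii) your claim that the $k$-agent Hessian quantities are ``dominated'' by the $\kmax$-agent ones is asserted rather than proved --- because of cross-externality terms $t(\bx_i,\bx'_i,\bx_j,\bx'_j)$ with $i \leq k < j$, the $k$-agent Hessian is not a principal submatrix of the $\kmax$-agent one --- but the paper's separable-embedding step makes essentially the same leap, so this is a bookkeeping issue in the lemma's constants rather than a defect specific to your argument.
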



The proof (deferred to the Appendix \ref{app:B}) is technical and in fact shows the Lipschitz continuity for a general class of parametrized convex optimization problems. Noting that local Lipschitz-ness suffices, we lower bound the difference between an optimal and sub-optimal solution using properties of strict concavity and smoothness of the potential function. The remainder of the proof leverages the continuity of the objective in both $\bx'$ and $\bomega$ along with the insights above to establish the Lipschitz-ness of the maximizer of $\Phi$. To relate this to $\NE(\bX, \fo, k)$, we define the optimization objective as summing the potential function over only the first $k$ participants with a trivial function involving the remaining agents.

Next, for a loss function $\ell(\fo(\bx_i), y_i)$ defined on classifying an individual's reported features, the average loss on a sample consisting of $k$ participating strategic agents is given by $\exloss(\bX, \by, f_\omega, k) = \tfrac{1}{k} \sum_{i=1}^{k}{\ell(\fo(\NE(\bX, \fo, k)_{i:}), y_i)}$\footnote{The notation $\NE(\bX, \fo)_{i:}$ denotes the equilibrium report of the $i^{th}$ agent.}.  We now define the standard learning theory notions using this:  

\begin{definition}
    For a classifier $\fo$, the \emph{true risk} $R(f_\omega)$ on distribution $\Df$ and the empirical risk $\hat{R}(f_\omega)$ on a dataset $\Ss$ consisting of $n$ independent samples from $\Df$ is given respectively by:
    \begin{equation*}
        R(f_\omega) = \E_{(\bX, \by, k) \sim \Df}{\exloss(\bX, \by, f_\omega, k)} \quad \text{and} \quad \hat{R}(f_\omega) = \frac{1}{n}\sum_{i=1}^{n}{\exloss(\bX_i, \by_i, f_\omega, k_i)}.
    \end{equation*}
\end{definition}

We want to show that a classifier chosen to minimize the risk on a finite dataset $\Ss$, denoted by $\hat{\fo}$, will be close in a PAC sense, to the true risk minimizer of the population, denoted by $\fo^*$. 
Recall that we focus on linear classifiers and as such, our function class is the set of norm-constrained linear functions: $\{\langle \bomega, \bx_i \rangle \text{ s.t. } ||\bomega|| \leq r \}$. We denote $\Omega = \{\bomega \text{ s.t. } ||\bomega||_2 \leq r\}$ as the corresponding parameter space. We now give our generalization guarantee:

\begin{theorem}\label{theorem:generalization}
    For any $\lambda$-Lipschitz loss function $\ell$, linear function class $\Omega = \{\bomega \text{ s.t. } ||\bomega||_2 \leq r\}$, agent dynamics captured by a strictly concave potential function $\Phi(\bx, \bx', \fo)$, and $\varepsilon > 0$, $\delta > 0$:
    \begin{align*}
        n \geq \frac{8}{\varepsilon^2}\ln\left(\frac{e}{\gamma}\right) + d \ln\left(\frac{16(\lambda(d + \eta r)\gamma}{\varepsilon} \right) \implies
        \Pb\left({|R(\hat{\fo}) - R(\fo^*)|} \geq \varepsilon\right) \leq \delta
    \end{align*}
    where $\eta$ is the Lipschitz constant from Lemma \ref{theorem:lipschitz_general}. 
\end{theorem}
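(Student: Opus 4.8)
The plan is to prove a uniform convergence bound over the parameter space $\Omega$ via a covering-number argument, and then convert it into the stated excess-risk guarantee through the standard empirical-risk-minimization comparison. The one genuinely non-routine ingredient is establishing that the per-instance loss $\exloss(\bX,\by,\fo,k)$ is Lipschitz in $\bomega$; everything after that is essentially textbook. Fix an instance $(\bX,\by,k)$ and two parameters $\bomega,\bomega'$, and write $\bz_i = \NE(\bX,\fo,k)_{i:}$ and $\bz_i' = \NE(\bX,\fop,k)_{i:}$ for the equilibrium reports. The key point is that $\bomega$ enters the loss \emph{twice}: through the score $\langle \bz_i,\bomega\rangle$ and through the equilibrium point $\bz_i$ itself. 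I would decompose
$$\langle \bz_i,\bomega\rangle - \langle \bz_i',\bomega'\rangle = \langle \bz_i - \bz_i',\bomega\rangle + \langle \bz_i',\bomega-\bomega'\rangle,$$
and bound each term by Cauchy--Schwarz. The first uses $\|\bomega\|_2 \le r$ together with the $\eta$-Lipschitz continuity of $\NE$ in $\bomega$ from Lemma \ref{theorem:lipschitz_general}, so that $\|\bz_i-\bz_i'\|_2 \le \eta\|\bomega-\bomega'\|_2$; the second uses $\|\bz_i'\|_2 \le \sqrt d \le d$ since equilibrium reports lie in $[0,1]^d$. Composing with the $\lambda$-Lipschitzness of $\ell$ and averaging over the $k$ agents shows that $\bomega \mapsto \exloss(\bX,\by,\fo,k)$ is $\lambda(d+\eta r)$-Lipschitz, uniformly over instances, and hence $R$ and $\Rhat$ (an expectation and an average of such maps) inherit the same Lipschitz constant in $\bomega$.

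Next I would set up the concentration machinery. Since features and equilibria are confined to $[0,1]^d$ and $\|\bomega\|_2 \le r$, the loss takes values in a bounded range, so for each fixed $\bomega$ the quantity $\Rhat(\fo)$ is an average of $n$ i.i.d.\ bounded terms with mean $R(\fo)$, and Hoeffding's inequality controls $|\Rhat(\fo)-R(\fo)|$. I would build a $\rho$-net of the Euclidean ball $\Omega$ of radius $r$, of cardinality at most $(3r/\rho)^d$, apply Hoeffding at each net point, and take a union bound. Using the Lipschitz constant from the previous step to transfer the deviation from the nearest net point to an arbitrary $\bomega$ (the net error contributes $2\lambda(d+\eta r)\rho$ to both $R$ and $\Rhat$), and choosing $\rho$ proportional to $\varepsilon$ and the per-point confidence proportional to $\delta$ divided by the net size, yields $\sup_{\bomega\in\Omega}|\Rhat(\fo)-R(\fo)| \le \varepsilon/2$ with probability at least $1-\delta$.

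With uniform convergence in hand, the final step is the standard ERM comparison followed by inverting the tail bound. Writing
\begin{align*}
R(\hat{\fo}) - R(\fo^*) &\le \big(R(\hat{\fo}) - \Rhat(\hat{\fo})\big) + \big(\Rhat(\hat{\fo}) - \Rhat(\fo^*)\big) + \big(\Rhat(\fo^*) - R(\fo^*)\big) \\
&\le 2\sup_{\bomega\in\Omega}|\Rhat(\fo)-R(\fo)| \le \varepsilon,
\end{align*}
the middle term is nonpositive because $\hat{\fo}$ minimizes $\Rhat$, while $R(\hat{\fo}) \ge R(\fo^*)$ because $\fo^*$ minimizes $R$, so the two-sided bound $|R(\hat{\fo})-R(\fo^*)|\le\varepsilon$ follows. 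Solving the Hoeffding/union-bound inequality $2(3r/\rho)^d\exp(-cn\varepsilon^2)\le\delta$ for $n$ then produces the stated sample complexity: the $d\ln(\cdot/\varepsilon)$ term comes from the net cardinality (with $\rho\propto\varepsilon/(\lambda(d+\eta r))$), and the remaining term from the $\ln(1/\delta)$ confidence factor.

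The main obstacle is entirely contained in the first step: the loss depends on $\bomega$ both explicitly through the linear classifier and implicitly through the equilibrium map $\NE(\bX,\fo,k)$, which is only defined as the solution of a parametrized convex program. This double dependence is precisely what Lemma \ref{theorem:lipschitz_general} resolves, and once its $\eta$-Lipschitz bound is invoked, the decomposition above renders the composite loss Lipschitz and the remainder of the argument is a by-now-classical covering-number generalization bound.
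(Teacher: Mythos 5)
Your proposal is correct and follows essentially the same route as the paper's proof: the same decomposition of the score difference into an equilibrium-shift term (controlled by the $\eta$-Lipschitzness of $\NE$ from Lemma \ref{theorem:lipschitz_general}) plus a parameter-shift term (controlled by boundedness of reports), yielding the Lipschitz constant $\lambda(d+\eta r)$, followed by the same covering-number, Hoeffding, union-bound, and ERM-comparison steps. The only difference is cosmetic: you use Cauchy--Schwarz in $\ell_2$ throughout where the paper juggles $\ell_1$ and Frobenius norms, arriving at the same constant.
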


The proof (deferred to the Appendix \ref{app:B}), notes that the average loss on the sample $(\bX, \by, k)$ is a scalar regardless of the randomness of the number of participants. Further, due to the Lipschitz-ness of the $\NE$ function, the loss changes predictably due to a changing classifier. This allows us to apply a covering argument and bound the discretization error. While the sample complexity itself does not directly depend on $\kmax$ (due mainly to the sample loss being scaled by the number of participants $\tfrac{1}{k}$) it does depend on the Lipschitz constant $\eta$. This constant is defined with respect to the potential function over the maximum number of $\kmax$ participants (see Lemma \ref{theorem:lipschitz_general}). Sample complexity, thus, has an implicit dependence on $\kmax$. 

\subsection{Minimizing Empirical Risk}
While we have shown PAC learning is possible in our multi-agent strategic classification game, a cornerstone of learning is minimizing the empirical risk; that is, finding a classifier that minimizes loss incurred on the training set $\Ss$. 
Even for a convex loss function $\ell$ and linear classifier $f_\omega$, minimizing this across the samples of $\Ss$ is non-trivial since the $\NE$ function is the solution to an optimization problem. In general, there is no closed-form expression for the $\NE$ function and we cannot hope for this loss minimization problem to be convex. 

Machine learning nonetheless has a vast literature on optimizing nonconvex loss functions; these, however, are largely gradient-based and require computing the loss gradient with respect to the learning parameter $\bomega$. 
This loss is computed with respect to the equilibrium outcome/score of a participating agent $i$: $f_{\bomega}(\NE(\bX, \fo)_{i:})$\footnote{Since we are concerned with the loss on a sample, we assume $k = \kmax$ and drop the dependence on $k$.}. When $\fo$ is linear, this is simply $\langle \NE(\bX, \fo)_{i:}, \bomega \rangle$. We now show when and how this gradient can be explicitly computed.

Let $v_i = \NE(\bX, \fo)_{i:}$ and $z_i = \fo(\NE(\bX, \fo)_{i:}) = f(v_i, \bomega)$ for $i \in [k]$, noting that $z_i \in \reals$ and $v_i \in \reals^d$. The loss gradient is given as follows: 
\begin{gather*}
    \nabla_{\bomega}\exloss(\bx, \by, \fo) = \frac{1}{k}\sum_{i=1}^{k}{ \nabla_{\bomega} \ell(f(\NE(\bX, \fo)_{i:}, \bomega), y_i) } = \frac{1}{k}\sum_{i=1}^{k}{ \nabla_{\bomega} \ell(z_i, y_i)} = \frac{1}{k}\sum_{i=1}^{k}{ \frac{\partial \ell}{\partial z_i} \nabla_{\bomega} z_i} \\
    = \frac{1}{k} \sum_{i=1}^{k}{\frac{\partial \ell}{\partial z_i} \left[ \nabla_{v_i}^T f \J_{\bomega}(\NE(\bx, \fo)_{i:}) + \nabla_{\bomega}f \right]}
\end{gather*}
We use the chain rule here, with $\J_{\bomega}$ representing the Jacobian with respect to $\bomega$. $\frac{\partial \ell}{\partial z_i}$ is the derivative of the loss with respect to the score, a feature of the loss function. Further, the gradients $\nabla_{v_i}^T f$ and $\nabla_{\bomega}f$ depend only on the function class we are learning. Indeed, for a linear function class, the desired gradient is given by:
\begin{equation}
    \nabla_{\bomega}\exloss(\bx, \by, \fo) = \frac{1}{k} \sum_{i=1}^{k}{\frac{\partial \ell}{\partial z_i} \left[ \bomega^T \J_{\bomega}(\NE(\bx, \fo)_{i:}) + \NE(\bx, \fo)_{i:} \right]}
\end{equation}

The remaining and crucial term is the Jacobian of the outcome of the PNE optimizer with respect to the weights: $\J_{\bomega}(\NE(\bx, \fo)_{i:})$. The theorem below proves exactly when this Jacobian exists, and explicitly characterizes it. Once again, for ease of exposition we interpret the input and output of the $\NE$ function as a vector.

\begin{theorem}\label{theorem:gradient}
    For $\mathrm{NE}(\bx, \fo) = \argmax_{\bx' \in [0,1]^{kd}}{\Phi(\bx', \bx, \fo)}$ where $\Phi$ is strictly concave and smooth, let $\blambda$ and $\bmu$ denote the dual variables corresponding to the upper and lower bound constraints. Then $\bx^*(\bomega) = \NE(\bx, \fo)$ is differentiable with respect to $\bomega$ everywhere except possibly if there exists an $x^*_i(\bomega) \in \{0,1\}$ with the corresponding dual variable set to 0; i.e., if there exists $i$ such that $\left(x^*_i(\bomega) = 1 \wedge \lambda_i^* = 0\right) \vee \left(x^*_i(\bomega) = 0 \wedge \mu_i^* = 0\right)$.
\end{theorem}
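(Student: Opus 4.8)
The plan is to treat $\NE(\bx, \fo)$ as the solution map of the box-constrained program $\max_{\bx' \in [0,1]^{kd}} \Phi(\bx', \bx, \fo)$ and to apply the implicit function theorem to its KKT system. Since $\Phi$ is strictly concave and smooth and the feasible set $[0,1]^{kd}$ is convex, the maximizer is unique (Theorem \ref{cor:unique_equilibrium}) and the KKT conditions are both necessary and sufficient: stationarity $\nabla_{\bx'}\Phi(\bx^*, \bx, \fo) - \blambda + \bmu = 0$, primal feasibility $0 \le x^*_i \le 1$, dual feasibility $\blambda, \bmu \ge 0$, and complementary slackness $\lambda_i(x^*_i - 1) = 0$ and $\mu_i x^*_i = 0$. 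I would fix a weight $\bomega_0$ at which \emph{strict} complementary slackness holds, i.e.\ none of the exceptional conditions in the statement occurs, and show $\bx^*$ is differentiable there; since these are the only points excluded, this proves the claim.

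First I would partition the coordinates at $\bomega_0$ into the interior set $I = \{i : 0 < x^*_i < 1\}$, the upper-active set $A_1 = \{i : x^*_i = 1\}$, and the lower-active set $A_0 = \{i : x^*_i = 0\}$. Strict complementarity forces $\lambda_i^* = \mu_i^* = 0$ on $I$, $\lambda_i^* > 0$ on $A_1$, and $\mu_i^* > 0$ on $A_0$. Holding the boundary coordinates fixed at $1$ on $A_1$ and $0$ on $A_0$, the only nontrivial stationarity equations are $F(\bx'_I, \bomega) := \nabla_{\bx'_I}\Phi = 0$ on the interior block, whose Jacobian in $\bx'_I$ is the principal submatrix $\nabla^2_{\bx'_I}\Phi$. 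Under Condition 1 the $\ell_2^2$ cost contributes a term $-2\alpha I$ to $\nabla^2_{\bx'}\Phi$, so the full Hessian is negative definite; as every principal submatrix of a negative definite matrix is negative definite, $\nabla^2_{\bx'_I}\Phi$ is invertible. The implicit function theorem then yields a differentiable map $\bomega \mapsto \tilde{\bx}_I(\bomega)$ near $\bomega_0$ solving $F = 0$, as smooth as $\Phi$ is.

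The key remaining step is to certify that the candidate $\tilde{\bx}(\bomega)$ — equal to $\tilde{\bx}_I(\bomega)$ on $I$, to $1$ on $A_1$, and to $0$ on $A_0$ — actually coincides with the true maximizer $\bx^*(\bomega)$ in a neighborhood, so that the active set does not switch. I would verify KKT directly: on $I$, continuity of $\tilde{\bx}_I$ keeps the coordinates strictly inside $(0,1)$, so setting $\lambda_i = \mu_i = 0$ satisfies stationarity and slackness; on $A_1$, setting $\mu_i = 0$ and $\lambda_i(\bomega) = \partial_{x'_i}\Phi(\tilde{\bx}(\bomega), \bomega)$ restores stationarity, and since this equals $\lambda_i^* > 0$ at $\bomega_0$, continuity keeps it positive nearby; the argument on $A_0$ is symmetric in $\bmu$. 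Because these conditions are sufficient for a concave program and the maximizer is unique, $\bx^*(\bomega) = \tilde{\bx}(\bomega)$ locally, so $\bx^*$ inherits differentiability at $\bomega_0$. The explicit Jacobian then follows by implicit differentiation of $F = 0$, giving $\J_{\bomega}\bx^*_I = -(\nabla^2_{\bx'_I}\Phi)^{-1}\nabla^2_{\bomega \bx'_I}\Phi$ with zero rows on the active coordinates.

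I expect the main obstacle to be the local-stability argument, namely verifying that the active set is genuinely locally constant and that the KKT multipliers retain their strict signs, rather than merely producing an interior stationary solution. This is exactly where strict complementary slackness is indispensable: without it, an active constraint carrying a zero multiplier could become inactive under an arbitrarily small perturbation of $\bomega$, flipping the active set and breaking differentiability. That degeneracy is precisely the exceptional case $\left(x^*_i = 1 \wedge \lambda_i^* = 0\right) \vee \left(x^*_i = 0 \wedge \mu_i^* = 0\right)$ carved out in the statement.
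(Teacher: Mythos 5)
Your proposal is correct, and it reaches the theorem by a genuinely different route than the paper. The paper encodes the \emph{entire} KKT system --- stationarity, primal feasibility, dual feasibility, and complementary slackness --- as one implicit equation $G(\bx', \blambda, \bmu, \bs, \bomega) = \bm{0}$ using squared slack variables, then argues that the Jacobian of a reduced system $G'$ (obtained by fixing an active-set pattern $\I$) has full rank precisely when strict complementarity holds, and invokes the implicit function theorem on that large primal--dual--slack system. You instead perform the classical active-set reduction (Fiacco-style sensitivity analysis): apply the implicit function theorem only to the interior stationarity block $\nabla_{\bx'_I}\Phi = 0$, freeze the boundary coordinates, and then certify via continuity of the multipliers and KKT sufficiency that the patched candidate remains the (unique) maximizer nearby. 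Your approach buys two things: a cleaner explicit Jacobian, $\J_{\bomega}\bx^*_I = -(\nabla^2_{\bx'_I}\Phi)^{-1}\nabla^2_{\bomega\bx'_I}\Phi$ with zero rows on active coordinates, rather than the inverse of the full slack-augmented Jacobian; and, importantly, an \emph{explicit} proof that the active set is locally constant under strict complementarity --- a step the paper leaves implicit when it partitions $\Omega$ into regions $\bomega(\I)$ and applies the IFT as if each region were a neighborhood. The paper's formulation, in exchange, yields differentiability of the full primal--dual map at once and extends verbatim to the non-concave/local-optimum remark at the end of Section 4.

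One point to flag: your invertibility argument for $\nabla^2_{\bx'_I}\Phi$ appeals to Condition 1 (the $-2\alpha\bm{I}$ contribution of the quadratic cost), which is not part of the theorem's stated hypothesis --- bare strict concavity does not rule out a singular Hessian (e.g.\ $-x^4$ at the origin). This is not a defect relative to the paper: the paper's own full-rank claim for $\J(G')$ silently requires the same nonsingularity (with all constraints inactive, its Jacobian is block-triangular with $\nabla^2_{\bx'}\Phi$ as a diagonal block, so linear independence of the column groups alone does not suffice). Your proof simply makes explicit an assumption the paper's proof also needs; under the paper's standing Conditions 1--3 (or $1'$ with strong convexity of the cumulative impact), both arguments go through.
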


The proof (deferred to Appendix \ref{app:B}) introduces primal and dual slack variables to capture the KKT conditions, which are necessary and sufficient in the optimization above, in an implicit equation. We show that the Jacobian of a slightly modified but representative version of this implicit function has full rank except possibly at the specified points. Aside from these degenerate points, we can use the implicit function theorem to explicitly compute the gradient. With this hand, the learner is free to use standard gradient descent based algorithms to solve the empirical risk minimization problem.   

We also note that this result may be insightful for settings even when the potential function may be non-concave. As we have affine constraints, the KKT conditions are still necessary at local optima~\citep{boyd2004convex}. Since our approach relies on implicitly characterizing the KKT conditions, we can still use this to compute the derivative at local optima, provided we can find them effectively. Game-theoretically, this would correspond to local Nash Equilibrium~\citep{balduzzi2018mechanics}.

\section{Models of Externality}\label{sec:cost_and_externality}
The presented results leverage the strict concavity of the potential function to assert that the PNE is unique, efficiently computable, and PAC learning guarantees possible for the learning problem. Section \ref{sec:equi_computation} concluded by noting a more general Condition ($1'$) on the cost and externality to ensure this. This allows us to capture possibly non-convex externality; we give two such examples motivated by the settings we highlighted.

\emph{Proportional Externality:} Externality in strategic classification can model an increased risk in detection/audit due to others also manipulating. In our admissions example, the risk of the university (classifier) suspecting something is amiss is much higher when many students wrongfully claim to be top of their class, as opposed to a handful. As such, it is natural that one's externality due to manipulation increases proportional to the extent to which others in the system also manipulate. We express the externality suffered by agent $i$ in such a setting as:
\begin{equation*}
    t(\bx'_i, \bx_i, \bx_j, \bx'_j) =  
    \tfrac{\beta}{k-1}\sum_{\ell=1}^{d}{(x'_{i\ell} - x_{i\ell})^2 (x'_{j\ell} - x_{j\ell})^2}
\end{equation*}
We scale by $k-1$ since each agent pays a single cost $c(\cdot)$ but suffers externality from $k-1$ agents; this allows $\alpha$ (the scale constant for the cost) and $\beta$ to be on the same scale. Next, observe that for individuals who do not manipulate, the total externality they incur is 0. This is consistent with one interpretation of the university admissions setting: even if many claim to be top of their class, those who truly are, have nothing to fear. We note this externality is pairwise symmetric and we show below that it satisfies the updated condition (proof in Appendix \ref{app:C}).

\begin{prop}\label{prop:proportional_externality}
    Under the cost $c(\bx, \bx') = \alpha||\bx - \bx'||_2^2$, the cumulative impact of manipulation under the proportional externality model is smooth and strictly convex for $\beta < \alpha$.
\end{prop}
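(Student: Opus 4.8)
The plan is to exploit two structural simplifications before doing any real computation. Smoothness is immediate: substituting $c(\bx_i,\bx'_i)=\alpha\sum_\ell (x'_{i\ell}-x_{i\ell})^2$ together with the proportional externality shows that the cumulative impact is a degree-four polynomial in the entries of $\bx'$, hence $C^\infty$. For strict convexity, I would first observe that the cumulative impact separates across feature coordinates: writing $w_i := x'_{i\ell}-x_{i\ell}$, it equals $\sum_{\ell=1}^d g(\bm{w})$ with $g(\bm{w})=\alpha\sum_{i} w_i^2+\tfrac{\beta}{k-1}\sum_{i<j} w_i^2 w_j^2$, where the $d$ summands depend on disjoint blocks of the optimization variables. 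Since the Hessian of such a separable sum is block diagonal, strict convexity of the whole reduces to strict convexity of the single-coordinate function $g$. Crucially, because $\bx_i,\bx'_i\in[0,1]^d$, every $w_i$ lies in $[-1,1]$, so I may assume $w_i^2\le 1$ throughout — this domain bound is what makes the result possible, since the externality term is not convex on its own.

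The key idea for handling $g$ is to redistribute the strictly convex cost evenly across the $\binom{k}{2}$ externality pairs. Each index $i$ appears in exactly $k-1$ pairs, so $\alpha\sum_i w_i^2=\tfrac{\alpha}{k-1}\sum_{i<j}(w_i^2+w_j^2)$, and therefore $g(\bm{w})=\tfrac{1}{k-1}\sum_{i<j} h(w_i,w_j)$ where $h(a,b):=\alpha(a^2+b^2)+\beta a^2 b^2$ is exactly the $k=2$ instance of the problem. It then suffices to show that each two-variable term $h$ has a positive-definite Hessian on $\{a^2,b^2\le 1\}$, and afterwards to upgrade the resulting positive semidefiniteness of $\nabla^2 g$ to positive definiteness.

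The heart of the argument — and the step I expect to be most delicate — is the $2\times 2$ Hessian test for $h$, since this is where the hypothesis $\beta<\alpha$ enters and the sharp threshold emerges. The diagonal entries $2\alpha+2\beta b^2$ and $2\alpha+2\beta a^2$ are positive, so positive definiteness reduces to the determinant condition $\det\nabla^2 h = 4\alpha^2+4\alpha\beta(a^2+b^2)-12\beta^2 a^2b^2>0$. Setting $p=a^2,\,q=b^2$, this expression is affine in each of $p,q$ separately, so over the box $[0,1]^2$ its minimum is attained at a vertex; the only vertex whose value can be nonpositive is $p=q=1$, where it equals $4\alpha^2+8\alpha\beta-12\beta^2=4(\alpha+3\beta)(\alpha-\beta)$, which is positive exactly when $\beta<\alpha$ (the remaining vertices give $4\alpha^2$ or $4\alpha^2+4\alpha\beta$, always positive). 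To finish, I would note that each $h(w_i,w_j)$, viewed as a function of all $k$ variables, contributes a Hessian that is positive definite on the coordinates $(i,j)$ and zero elsewhere; hence $\bm{v}^\top(\nabla^2 g)\bm{v}$ is a nonnegative sum that vanishes only if $v_i=v_j=0$ for every pair $i<j$, which for $k\ge 2$ forces $\bm{v}=\vzero$. Thus $\nabla^2 g\succ 0$ at every feasible point, and summing the positive-definite per-coordinate blocks yields strict convexity of the full cumulative impact, completing the proof.
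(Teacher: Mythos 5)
Your proposal is correct and follows essentially the same route as the paper's proof: the same separation across feature coordinates, the same redistribution of the cost $\alpha\sum_i w_i^2 = \tfrac{\alpha}{k-1}\sum_{i<j}(w_i^2+w_j^2)$ over the $\binom{k}{2}$ pairs, and the same $2\times 2$ Hessian test with determinant $4\alpha^2 + 4\alpha\beta(a^2+b^2) - 12\beta^2 a^2b^2$. The only differences are in execution, and both favor you slightly: you certify positivity of the determinant via the multiaffine-minimum-at-a-vertex argument, which yields the factorization $4(\alpha+3\beta)(\alpha-\beta)$ and makes the sharpness of the threshold $\beta<\alpha$ visible, whereas the paper uses the inequality chain $\alpha^2+\alpha\beta u_i^2+\alpha\beta u_j^2 > \beta^2(1+u_i^2+u_j^2) \geq 3\beta^2 u_i^2u_j^2$; and you explicitly upgrade the sum of per-pair positive-definite blocks from positive semidefiniteness to positive definiteness of the full Hessian, a step the paper leaves implicit under ``convexity is preserved in summation.''
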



\emph{Congestion Externality:} In many decision-making scenarios, reported features map to real resources and have downstream consequences beyond classification. Externality can thus model an increased cost for manipulated features due to demand from others. Many countries, for example, deploy immigration policies that favour candidates who pledge to settle in under-populated areas \citep{picot2023provincial}. Naturally, an influx of applicants may report such intentions and initially move to these areas (with many reneging on this soon after and relocating\footnote{Canada uses such a policy and reported provincial retention rates as low as 39\% \citep{picot2023provincial}.}); this can significantly increase housing and living costs for new immigrants in these underpopulated communities. In such cases, individuals suffer from others manipulating even if they are being honest. Externalities of this nature can be modelled as follows:
\begin{gather*}
    t(\bx'_i, \bx_i, \bx_j, \bx'_j) = \tfrac{\beta}{k-1}\sum_{\ell=1}^{d}{\exp(-(x'_{i\ell} - x'_{j\ell})^2)}
\end{gather*}
Under this externality, when reported values (regardless of their veracity) become more similar, indicating usage of common resources, the externality to agents increases, with the exponential function ensuring this is smooth. Again, it is pairwise symmetric and satisfies our updated convexity condition (proof in Appendix \ref{app:C}). 

\begin{prop}\label{prop:congestion_externality}
    Under cost $c(\bx, \bx') = \alpha||\bx - \bx'||_2^2$, the cumulative impact of manipulation under the congestion externality model is smooth and strictly convex for $\beta < \tfrac{\alpha}{\sqrt{2}}$.
\end{prop}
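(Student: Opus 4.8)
The plan is to verify that the congestion externality satisfies Condition ($1'$) from the end of Section \ref{sec:equi_computation}, namely that the cumulative impact of manipulation $\sum_i c(\bx_i,\bx'_i)+\sum_{i<j} t(\bx'_i,\bx_i,\bx_j,\bx'_j)$ is smooth and strictly convex in the manipulation variables; uniqueness and efficient computability of the PNE then follow from Theorem \ref{cor:unique_equilibrium}. Smoothness is immediate, since every summand is a composition of the $C^\infty$ maps $\exp(\cdot)$, polynomials, and the squared $\ell_2$ norm. The entire content is strict convexity, which I would establish by showing that the Hessian of the cumulative impact is positive definite in the stated range of $\beta$.

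First I would exploit separability across feature coordinates. Both the cost $\alpha\sum_i\|\bx_i-\bx'_i\|_2^2$ and the externality sum decompose coordinate-by-coordinate, with no coupling between distinct coordinates $\ell$, so the Hessian is block diagonal with one $k\times k$ block per coordinate. Strict convexity of the whole therefore reduces to strict convexity, for a single fixed coordinate, of
\[
    g(u_1,\dots,u_k)=\alpha\sum_{i=1}^k(u_i-a_i)^2+\frac{\beta}{k-1}\sum_{i<j}\phi(u_i-u_j),
\]
where $\phi(s)=e^{-s^2}$, the $u_i$ are the reported coordinate values, and the $a_i$ are the true ones. Next I would compute $\nabla^2 g$: the cost contributes $2\alpha I_k$, while each pairwise term $\phi(u_i-u_j)$ contributes $\phi''(u_i-u_j)(\bm{e}_i-\bm{e}_j)(\bm{e}_i-\bm{e}_j)^{\top}$, with $\phi''(s)=(4s^2-2)e^{-s^2}$. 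Hence
\[
    \nabla^2 g=2\alpha I_k+\frac{\beta}{k-1}\sum_{i<j}\phi''(u_i-u_j)(\bm{e}_i-\bm{e}_j)(\bm{e}_i-\bm{e}_j)^{\top}.
\]

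The crux is a uniform lower bound on the spectrum of the externality term. I would combine two elementary facts: a one-variable calculus argument shows $\phi''(s)\ge -2$ for all $s$ (the minimum being attained at $s=0$), and $\sum_{i<j}(\bm{e}_i-\bm{e}_j)(\bm{e}_i-\bm{e}_j)^{\top}=L$ is the Laplacian of the complete graph $K_k$, whose largest eigenvalue is $k$. For any unit vector $v$ these give $v^{\top}(\nabla^2 g)v\ge 2\alpha-\tfrac{2\beta}{k-1}\,v^{\top}Lv\ge 2\alpha-\tfrac{2\beta k}{k-1}$, which is positive, so $\nabla^2 g\succ 0$, once $\beta$ is sufficiently small relative to $\alpha$. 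Since this holds in every coordinate block, the cumulative impact is strictly convex and Condition ($1'$) is met.

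The main obstacle is pinning down the sharp constant in the admissible range of $\beta$. The negative curvature is worst precisely when all agents report the same value: then every $u_i-u_j=0$, forcing $\phi''=-2$ on all pairs simultaneously, so the externality Hessian collapses to $-\tfrac{2\beta}{k-1}L$ and contributes its most negative eigenvalue $-\tfrac{2\beta k}{k-1}$. One must verify that no other report profile does worse, which holds because the bounds $\phi''\ge -2$ and $v^{\top}Lv\le k$ both saturate in that same configuration; carefully tracking this extremal profile is what fixes the threshold on $\beta$. Note that the restriction of reports to $[0,1]$ (so $|u_i-u_j|\le 1$) is available but is not needed here, since the worst case already sits at $u_i-u_j=0$.
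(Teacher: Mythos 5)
Your route is genuinely different from the paper's, and your linear algebra is sound; the problem is that it proves a different constant than the one in the statement, and the two cannot be reconciled. The paper, after the same coordinate-wise reduction, splits each coordinate's function into $\binom{k}{2}$ two-variable summands (distributing each agent's cost over the $k-1$ pairs containing it) and checks each $2\times 2$ Hessian block by its determinant. You instead keep the full $k\times k$ coordinate Hessian and bound it spectrally via $\phi''\ge -2$ and $\lambda_{\max}(L)=k$ for the complete-graph Laplacian $L$. Moreover, your bound is tight: at any profile where all agents report the same value, the coordinate Hessian is exactly $2\alpha I_k-\tfrac{2\beta}{k-1}L$, whose smallest eigenvalue is $2\alpha-\tfrac{2\beta k}{k-1}$.

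Here is the gap: your argument yields strict convexity precisely when $\beta<\tfrac{k-1}{k}\alpha$, and you never check this against the claimed range $\beta<\tfrac{\alpha}{\sqrt{2}}$. For $k\ge 4$ you are done, since $\tfrac{k-1}{k}\ge\tfrac{3}{4}>\tfrac{1}{\sqrt{2}}$. But for $k\in\{2,3\}$ we have $\tfrac{k-1}{k}\in\{\tfrac12,\tfrac23\}<\tfrac{1}{\sqrt2}$, and because your bound saturates at the all-equal profile (an interior point of the feasible box, e.g.\ every report equal to $\tfrac12$), no sharper analysis can rescue the stated constant: for $\beta\in\bigl(\tfrac{k-1}{k}\alpha,\tfrac{\alpha}{\sqrt2}\bigr)$ the Hessian at that profile has the negative eigenvalue $2\alpha-\tfrac{2\beta k}{k-1}$, so the cumulative impact is \emph{not} convex there. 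Your closing remark that ``tracking the extremal profile fixes the threshold'' is right in spirit, but the threshold it fixes is $\tfrac{k-1}{k}\alpha$, not $\tfrac{\alpha}{\sqrt2}$.

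In fact, your computation exposes an algebra error in the paper's own proof: in its $2\times2$ Hessian both diagonal entries should equal $2\alpha+2\beta e^{-u^2}(2u^2-1)$ (the printed $(2,2)$ entry containing $[2u^2+1]$ is a slip), so the $\beta^2$ terms in the determinant cancel and one is left with $4\alpha^2+8\alpha\beta e^{-u^2}(2u^2-1)$, whose worst case at $u=0$ requires $\beta<\tfrac{\alpha}{2}$, not $\beta<\tfrac{\alpha}{\sqrt2}$. Corrected, the paper's pairwise decomposition gives the $k$-independent sufficient condition $\beta<\tfrac{\alpha}{2}$, while your Laplacian argument gives the exact, $k$-dependent threshold $\tfrac{k-1}{k}\alpha$, which is strictly better for $k\ge3$. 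Either way, the proposition as printed holds only for $k\ge4$; for it to hold for all $k$, its constant must be tightened to $\tfrac{\alpha}{2}$.
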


We note that while these functions capture the spirit of each setting, they are by no means definitive. Indeed there may be other representations for these settings that satisfy our desiderata. Choosing the right model for the right context is an important design goal of the decision-maker.

\section{Experiments}
\begin{figure}[h]
    \centering
    \includegraphics[width=1.0\linewidth]{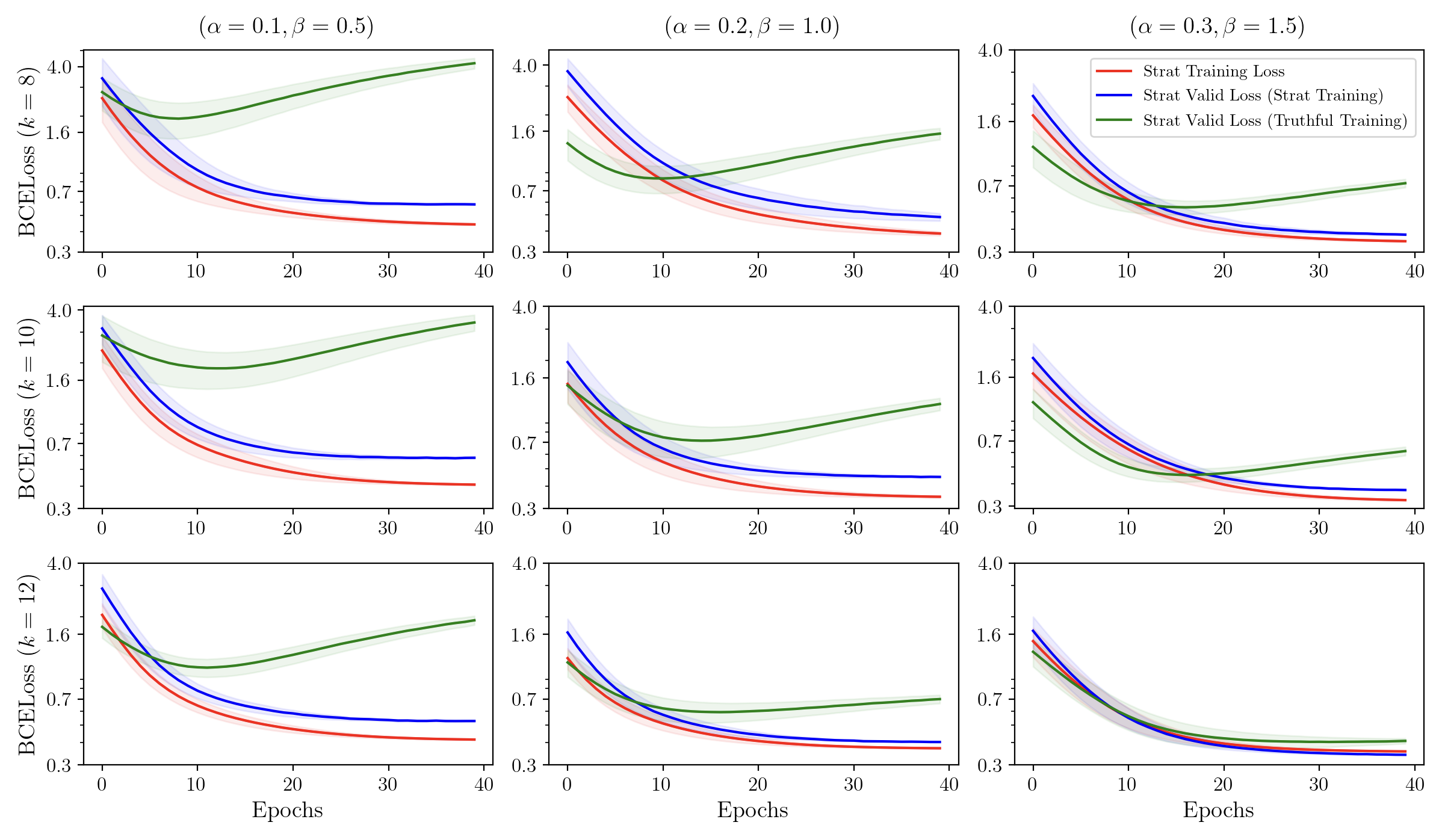}
    \caption{Strategic training (red), corresponding strategic validation losses (blue), and strategic validation loss of a baseline classifier trained with truthful/non-strategic reporting (green). We plot the mean losses over 15 random datasets (with 90\% confidence interval) vs training epochs.}
    \label{fig:strat_train_baseline}
\end{figure}

We experimentally validate the possibility of training classifiers to be strategy-aware with respect to $k$ agents misreporting to PNE. We use a 2-class synthetic dataset generated by \texttt{scikit-learn} and use the \texttt{cvxpylayers} library to compute PNE and update gradients \citep{scikit-learn, cvxpylayers2019}. We use a linear classifier that returns the probability of assigning the positive class. Agent utilities are computed as per \Cref{eq:utility}, and we use the $\ell_2$ norm square for the cost --- $c(\bx_i, \bx'_i) = \alpha||\bx_i - \bx'_i||_2^2$ --- and the convex externality discussed in  \Cref{sec:equi_computation}: $t(\bx_i, \bx'_i, \bx_j, \bx'_j) = \beta(||\bx'_i - \bx_i||_2 + ||\bx'_j - \bx_j||_2)^2$, with $T(\bx_i, \bx'_i, \bX, \bX') = \frac{\beta}{k-1}\sum{t(\bx_i, \bx'_i, \bx_j, \bx'_j)}$; we normalize by $(k-1)$ to make the $\beta$ term interpretable since each agent face externality from $(k-1)$ others. During training, we sample a set of $k$ agents from the training set, compute their PNE under the current classifier, use these PNE features to determine the outcome and loss, and then update gradients. We denote this as \emph{strategic training}, and the corresponding loss curve is in red in figure \ref{fig:strat_train_baseline}. After each epoch, we run our model on the validation/test dataset, wherein we carry out the same procedure: sample $k$ agents, compute PNE, and then classify. The resulting loss is plotted in blue in Figure \ref{fig:strat_train_baseline}. We ablate this experiment over different values of $k$ and different strengths of cost and externality.

Given that the training and validation curves are decreasing in lockstep, our experiments indicate that the model is indeed generalizing. Further, the generalization error seems to decrease as the intensity of the cost and externality increase. As a baseline to compare against, we train a truthful/non-strategic classifier to optimality and consider the implications of using this classifier at test-time where agents are strategic (captured in the green curve). This is meant to simulate the deployment of non-strategy-aware classifiers in a strategic setting. We note that for nearly all configurations, this baseline performs poorly, as expected. That said, while it is exceedingly poor for lower intensities of cost and externalities, as the intensity increases, the baseline performance is relatively improved. This trend is somewhat expected since at higher intensities, the negative impact of manipulation is higher, incentivizing many to stay close to their true reports. More interestingly, we note that as $k$ increases, the non-strategic baseline performance also improves. While this is less intuitive, as $k$ increases, any single agent's influence on total externality diminishes, and this quantity becomes an average over a larger set of values. This has a stabilizing effect on the externality an agent experiences which may lead to more moderate equilibrium strategies. Nonetheless, it would be intriguing future work to better understand how aggressively the equilibrium reports shift from true values as $k$ grows larger. In Appendix \ref{app:appendix_exp}, we compare against another baseline where the model is trained only considering cost and not externality, mimicking the classic setup of \citet{hardt2016strategic}.

\section{Discussion}\label{sec:discussion}
This paper studies a fundamental question within the strategic classification paradigm: what is the effect of inter-agent externalities on both the agents and the classifier? It is no longer reasonable to assume agents simply best respond to the classifier; rather, the Nash Equilibrium of the induced game becomes the natural solution concept, and we provide a set of conditions whereupon this equilibrium is unique and efficiently computable. The classifier, on the other hand, must learn an optimal model with respect to such an induced equilibrium. We show that this Stackelberg-Nash Equilibrium can be learned in a PAC sense and its loss gradients computable. This paper shows the possibility of deploying loss-minimizing classifiers robust against rich manipulation dynamics.

A limitation of our work is the structural assumptions on externality. Their pairwise nature gives rise to a potential game, and the convexity assumption ensures this is efficiently computable and satisfies Lipschitz regularity conditions. This leads to an intriguing open question: what learning guarantees, if any, can be given for non-concave potential functions where equilibria may not be unique or well-behaved? Can we precisely specify the necessary conditions (our results give a set of sufficient conditions) on externality to ensure both learnability and robustness? Understanding this is both technically and practically interesting. Another important direction is assuming the learner does \emph{not} know the cost and externality model and must learn them by observing equilibrium reports over multiple interactions. This closely parallels the literature on online strategic classification \citep{dong2018strategic, chen2020learning} and learning in games~\citep{cesa2006prediction}. Given the high stakes, characterizing the welfare properties of multi-agent strategic interaction is also imperative. This can involve parametrizing the price of anarchy~\citep{roughgarden2010algorithmic} in terms of a given classifier or simultaneously optimizing for welfare alongside robustness. In settings with transferrable utility, it may also be pertinent to consider coalition dynamics within the strategic model. Lastly, engaging with pertinent stakeholders to accurately capture real costs and externality and experimentally validate our model is necessary. This paper lays the groundwork for these important questions that arise when deploying classifiers in strategic multi-agent settings.

\section*{Acknowledgements}
Ariel Procaccia was partially supported by the National Science Foundation under grants IIS-2147187 and IIS-2229881; by the Office of Naval Research under grants N00014-24-1-2704 and N00014-25-1-2153; and by a grant from the Cooperative AI Foundation. Yiling Chen was partially supported by Amazon and the National Science Foundation under grant IIS-2147187.

\bibliography{bibliography}

\begin{thebibliography}{47}
\providecommand{\natexlab}[1]{#1}
\providecommand{\url}[1]{\texttt{#1}}
\expandafter\ifx\csname urlstyle\endcsname\relax
  \providecommand{\doi}[1]{doi: #1}\else
  \providecommand{\doi}{doi: \begingroup \urlstyle{rm}\Url}\fi

\bibitem[Agarwal et~al.(2020)Agarwal, Dahleh, Horel, and Rui]{agarwal2020towards}
Anish Agarwal, Munther Dahleh, Thibaut Horel, and Maryann Rui.
\newblock Towards data auctions with externalities.
\newblock \emph{arXiv preprint arXiv:2003.08345}, 2020.

\bibitem[Agrawal et~al.(2019)Agrawal, Amos, Barratt, Boyd, Diamond, and Kolter]{cvxpylayers2019}
A.~Agrawal, B.~Amos, S.~Barratt, S.~Boyd, S.~Diamond, and Z.~Kolter.
\newblock Differentiable convex optimization layers.
\newblock In \emph{Advances in Neural Information Processing Systems}, 2019.

\bibitem[Balduzzi et~al.(2018)Balduzzi, Racaniere, Martens, Foerster, Tuyls, and Graepel]{balduzzi2018mechanics}
David Balduzzi, Sebastien Racaniere, James Martens, Jakob Foerster, Karl Tuyls, and Thore Graepel.
\newblock The mechanics of n-player differentiable games.
\newblock In \emph{International Conference on Machine Learning}, pp.\  354--363, 2018.

\bibitem[Bechavod et~al.(2022)Bechavod, Podimata, Wu, and Ziani]{bechavod2022information}
Yahav Bechavod, Chara Podimata, Steven Wu, and Juba Ziani.
\newblock Information discrepancy in strategic learning.
\newblock In \emph{International Conference on Machine Learning}, pp.\  1691--1715, 2022.

\bibitem[Bejarano~Carbo(2021)]{bejarano2021machine}
Maria~Patricia Bejarano~Carbo.
\newblock Machine learning applications in the united states criminal justice system: A critical content analysis of the compas recidivism risk assessment.
\newblock 2021.

\bibitem[Berge(1963)]{berge1877topological}
Claude Berge.
\newblock \emph{Topological spaces}.
\newblock Oliver \& Boyd, 1963.

\bibitem[Boyd \& Vandenberghe(2004)Boyd and Vandenberghe]{boyd2004convex}
Stephen Boyd and Lieven Vandenberghe.
\newblock \emph{Convex optimization}.
\newblock Cambridge university press, 2004.

\bibitem[Br{\"u}ckner \& Scheffer(2009)Br{\"u}ckner and Scheffer]{bruckner2009nash}
Michael Br{\"u}ckner and Tobias Scheffer.
\newblock Nash equilibria of static prediction games.
\newblock \emph{Advances in neural information processing systems}, 22, 2009.

\bibitem[Br{\"u}ckner et~al.(2012)Br{\"u}ckner, Kanzow, and Scheffer]{bruckner2012static}
Michael Br{\"u}ckner, Christian Kanzow, and Tobias Scheffer.
\newblock Static prediction games for adversarial learning problems.
\newblock \emph{The Journal of Machine Learning Research}, 13\penalty0 (1):\penalty0 2617--2654, 2012.

\bibitem[Cesa-Bianchi \& Lugosi(2006)Cesa-Bianchi and Lugosi]{cesa2006prediction}
Nicolo Cesa-Bianchi and G{\'a}bor Lugosi.
\newblock \emph{Prediction, learning, and games}.
\newblock Cambridge university press, 2006.

\bibitem[Chen et~al.(2018)Chen, Podimata, Procaccia, and Shah]{chen2018strategyproof}
Yiling Chen, Chara Podimata, Ariel~D Procaccia, and Nisarg Shah.
\newblock Strategyproof linear regression in high dimensions.
\newblock In \emph{Proceedings of the 2018 ACM Conference on Economics and Computation}, pp.\  9--26, 2018.

\bibitem[Chen et~al.(2020)Chen, Liu, and Podimata]{chen2020learning}
Yiling Chen, Yang Liu, and Chara Podimata.
\newblock Learning strategy-aware linear classifiers.
\newblock \emph{Advances in Neural Information Processing Systems}, 33:\penalty0 15265--15276, 2020.

\bibitem[Coase(1960)]{coase1960problem}
RH~Coase.
\newblock The problem of social cost.
\newblock \emph{Journal of Law and Economics}, pp.\  1--44, 1960.

\bibitem[Dong et~al.(2018)Dong, Roth, Schutzman, Waggoner, and Wu]{dong2018strategic}
Jinshuo Dong, Aaron Roth, Zachary Schutzman, Bo~Waggoner, and Zhiwei~Steven Wu.
\newblock Strategic classification from revealed preferences.
\newblock In \emph{Proceedings of the 2018 ACM Conference on Economics and Computation}, pp.\  55--70, 2018.

\bibitem[Eilat et~al.(2023)Eilat, Finkelshtein, Baskin, and Rosenfeld]{eilat2022strategic}
Itay Eilat, Ben Finkelshtein, Chaim Baskin, and Nir Rosenfeld.
\newblock Strategic classification with graph neural networks.
\newblock In \emph{Proceeding of the 11-th International Conference on Learning Representations (ICLR)}, 2023.

\bibitem[Evans et~al.(2010)Evans, Kelley, Sikora, and Treiman]{evans2010family}
Mariah~DR Evans, Jonathan Kelley, Joanna Sikora, and Donald~J Treiman.
\newblock Family scholarly culture and educational success: Books and schooling in 27 nations.
\newblock \emph{Research in social stratification and mobility}, 28\penalty0 (2):\penalty0 171--197, 2010.

\bibitem[Ghalme et~al.(2021)Ghalme, Nair, Eilat, Talgam-Cohen, and Rosenfeld]{ghalme2021strategic}
Ganesh Ghalme, Vineet Nair, Itay Eilat, Inbal Talgam-Cohen, and Nir Rosenfeld.
\newblock Strategic classification in the dark.
\newblock In \emph{In Proceeding of the 24th International Conference on Machine Learning (ICML)}, pp.\  3672--3681, 2021.

\bibitem[Goeree et~al.(2005)Goeree, Maasland, Onderstal, and Turner]{goeree2005not}
Jacob~K Goeree, Emiel Maasland, Sander Onderstal, and John~L Turner.
\newblock How (not) to raise money.
\newblock \emph{Journal of Political Economy}, 113\penalty0 (4):\penalty0 897--918, 2005.

\bibitem[Haghtalab et~al.(2020)Haghtalab, Immorlica, Lucier, and Wang]{haghtalab2020maximizing}
Nika Haghtalab, Nicole Immorlica, Brendan Lucier, and Jack~Z. Wang.
\newblock Maximizing welfare with incentive-aware evaluation mechanisms.
\newblock In \emph{Proceeding of the 26-th International Joint Conference on Artificial Intelligence (IJCAI)}, pp.\  160--166, 2020.

\bibitem[Hardt et~al.(2016)Hardt, Megiddo, Papadimitriou, and Wootters]{hardt2016strategic}
Moritz Hardt, Nimrod Megiddo, Christos Papadimitriou, and Mary Wootters.
\newblock Strategic classification.
\newblock In \emph{Proceedings of the 2016 ACM conference on innovations in theoretical computer science}, pp.\  111--122, 2016.

\bibitem[Harris et~al.(2022)Harris, Ngo, Stapleton, Heidari, and Wu]{harris2022strategic}
Keegan Harris, Dung Daniel~T Ngo, Logan Stapleton, Hoda Heidari, and Steven Wu.
\newblock Strategic instrumental variable regression: Recovering causal relationships from strategic responses.
\newblock In \emph{International Conference on Machine Learning}, pp.\  8502--8522, 2022.

\bibitem[Harwell(2022)]{harwell2022face}
Drew Harwell.
\newblock A face-scanning algorithm increasingly decides whether you deserve the job.
\newblock In \emph{Ethics of Data and Analytics}, pp.\  206--211. Auerbach Publications, 2022.

\bibitem[Heidari et~al.(2019)Heidari, Nanda, and Gummadi]{heidari2019long}
Hoda Heidari, Vedant Nanda, and Krishna~P Gummadi.
\newblock On the long-term impact of algorithmic decision policies: Effort unfairness and feature segregation through social learning.
\newblock In \emph{International Conference on Machine Learning}, pp.\  2692--2701, 2019.

\bibitem[Horowitz \& Rosenfeld(2023)Horowitz and Rosenfeld]{horowitz2023causal}
Guy Horowitz and Nir Rosenfeld.
\newblock Causal strategic classification: A tale of two shifts.
\newblock In \emph{International Conference on Machine Learning}, pp.\  13233--13253, 2023.

\bibitem[Hossain \& Chen(2024)Hossain and Chen]{chen2023equilibrium}
Safwan Hossain and Yiling Chen.
\newblock Equilibrium of data markets with externality.
\newblock In \emph{International Conference on Machine Learning}, 2024.

\bibitem[Hossain \& Shah(2021)Hossain and Shah]{hossain2021effect}
Safwan Hossain and Nisarg Shah.
\newblock The effect of strategic noise in linear regression.
\newblock \emph{Autonomous Agents and Multi-Agent Systems}, 35\penalty0 (2):\penalty0 21, 2021.

\bibitem[Hu et~al.(2019)Hu, Immorlica, and Vaughan]{hu2019disparate}
Lily Hu, Nicole Immorlica, and Jennifer~Wortman Vaughan.
\newblock The disparate effects of strategic manipulation.
\newblock In \emph{Proceedings of the 2nd Conference on Fairness, Accountability, and Transparency (FAccT)}, pp.\  259--268, 2019.

\bibitem[Jagadeesan et~al.(2021)Jagadeesan, Mendler-D{\"u}nner, and Hardt]{jagadeesan2021alternative}
Meena Jagadeesan, Celestine Mendler-D{\"u}nner, and Moritz Hardt.
\newblock Alternative microfoundations for strategic classification.
\newblock In \emph{International Conference on Machine Learning}, pp.\  4687--4697, 2021.

\bibitem[Kleinberg \& Raghavan(2020)Kleinberg and Raghavan]{kleinberg2020classifiers}
Jon Kleinberg and Manish Raghavan.
\newblock How do classifiers induce agents to invest effort strategically?
\newblock \emph{ACM Transactions on Economics and Computation (TEAC)}, 8\penalty0 (4):\penalty0 1--23, 2020.

\bibitem[Kumar et~al.(2022)Kumar, Hines, and Dickerson]{kumar2022equalizing}
I~Elizabeth Kumar, Keegan~E Hines, and John~P Dickerson.
\newblock Equalizing credit opportunity in algorithms: Aligning algorithmic fairness research with us fair lending regulation.
\newblock In \emph{Proceedings of the 2022 AAAI/ACM Conference on AI, Ethics, and Society}, pp.\  357--368, 2022.

\bibitem[Levanon \& Rosenfeld(2021)Levanon and Rosenfeld]{levanon2021strategic}
Sagi Levanon and Nir Rosenfeld.
\newblock Strategic classification made practical.
\newblock In \emph{International Conference on Machine Learning}, pp.\  6243--6253, 2021.

\bibitem[Levanon \& Rosenfeld(2022)Levanon and Rosenfeld]{levanon2022generalized}
Sagi Levanon and Nir Rosenfeld.
\newblock Generalized strategic classification and the case of aligned incentives.
\newblock In \emph{International Conference on Machine Learning}, pp.\  12593--12618, 2022.

\bibitem[Li et~al.(2019)Li, Mei, Xu, Zhang, and Zhao]{li2019facility}
Minming Li, Lili Mei, Yi~Xu, Guochuan Zhang, and Yingchao Zhao.
\newblock Facility location games with externalities.
\newblock In \emph{Proceedings of the 18th International Conference on Autonomous Agents and Multiagent Systems}, pp.\  1443--1451, 2019.

\bibitem[Miller et~al.(2020)Miller, Milli, and Hardt]{miller2020strategic}
John Miller, Smitha Milli, and Moritz Hardt.
\newblock Strategic classification is causal modeling in disguise.
\newblock In \emph{Proceedings of the 21st International Conference on Machine Learning (ICML)}, pp.\  6917--6926, 2020.

\bibitem[Milli et~al.(2019)Milli, Miller, Dragan, and Hardt]{milli2019social}
Smitha Milli, John Miller, Anca~D Dragan, and Moritz Hardt.
\newblock The social cost of strategic classification.
\newblock In \emph{Proceedings of the Conference on Fairness, Accountability, and Transparency (FAccT)}, pp.\  230--239, 2019.

\bibitem[Monderer \& Shapley(1996)Monderer and Shapley]{monderer1996potential}
Dov Monderer and Lloyd~S Shapley.
\newblock Potential games.
\newblock \emph{Games and economic behavior}, 14\penalty0 (1):\penalty0 124--143, 1996.

\bibitem[Nakamura(2015)]{nakamura2015one}
Tomoya Nakamura.
\newblock One-leader and multiple-follower stackelberg games with private information.
\newblock \emph{Economics Letters}, 127:\penalty0 27--30, 2015.

\bibitem[Neyman(1997)]{neyman1997correlated}
Abraham Neyman.
\newblock Correlated equilibrium and potential games.
\newblock \emph{International Journal of Game Theory}, 26\penalty0 (2):\penalty0 223--227, 1997.

\bibitem[Nocedal \& Wright(1999)Nocedal and Wright]{nocedal1999numerical}
Jorge Nocedal and Stephen~J Wright.
\newblock \emph{Numerical optimization}.
\newblock Springer, 1999.

\bibitem[Pedregosa et~al.(2011)Pedregosa, Varoquaux, Gramfort, Michel, Thirion, Grisel, Blondel, Prettenhofer, Weiss, Dubourg, Vanderplas, Passos, Cournapeau, Brucher, Perrot, and Duchesnay]{scikit-learn}
F.~Pedregosa, G.~Varoquaux, A.~Gramfort, V.~Michel, B.~Thirion, O.~Grisel, M.~Blondel, P.~Prettenhofer, R.~Weiss, V.~Dubourg, J.~Vanderplas, A.~Passos, D.~Cournapeau, M.~Brucher, M.~Perrot, and E.~Duchesnay.
\newblock Scikit-learn: Machine learning in {P}ython.
\newblock \emph{Journal of Machine Learning Research}, 12:\penalty0 2825--2830, 2011.

\bibitem[Picot et~al.(2023)Picot, Crossman, and Hou]{picot2023provincial}
WG~Picot, Eden Crossman, and Feng Hou.
\newblock \emph{The Provincial Nominee Program: Retention in Province of Landing}.
\newblock Statistics Canada= Statistique Canada, 2023.

\bibitem[Pigou(1920)]{pigou1920economics}
AC~Pigou.
\newblock The economics of welfare.
\newblock 1920.

\bibitem[Roughgarden(2010)]{roughgarden2010algorithmic}
Tim Roughgarden.
\newblock Algorithmic game theory.
\newblock \emph{Communications of the ACM}, 53\penalty0 (7):\penalty0 78--86, 2010.

\bibitem[Still(2018)]{still2018lectures}
Georg Still.
\newblock Lectures on parametric optimization: An introduction.
\newblock \emph{Optimization Online}, pp.\ ~2, 2018.

\bibitem[Wright \& Recht(2022)Wright and Recht]{wright2022optimization}
Stephen~J Wright and Benjamin Recht.
\newblock \emph{Optimization for data analysis}.
\newblock Cambridge University Press, 2022.

\bibitem[Xie \& Zhang(2024)Xie and Zhang]{xie2024non}
Tian Xie and Xueru Zhang.
\newblock Non-linear welfare-aware strategic learning.
\newblock \emph{arXiv preprint arXiv:2405.01810}, 2024.

\bibitem[Xu et~al.(2016)Xu, Li, Tao, Xu, Li, and Tao]{xu2016methodology}
Jiuping Xu, Zongmin Li, Zhimiao Tao, Jiuping Xu, Zongmin Li, and Zhimiao Tao.
\newblock Methodology from an equilibria viewpoints.
\newblock \emph{Random-Like Bi-level Decision Making}, pp.\  365--386, 2016.

\end{thebibliography}
\bibliographystyle{unsrtnat}

\newpage
\appendix

\section{Imperfect Information Setting}
Our model considers the agent interaction induced by a classifier as a simultaneous game with the Pure Nash Equilibrium (PNE) solution concept. In general, PNE are defined for complete information settings since the agent needs to evaluate their best-response for them to verify an equilibrium. Nonetheless, recent literature on strategic classification literature has considered settings wherein agents have uncertainty about some relevant information \citep{jagadeesan2021alternative, bechavod2022information, xie2024non}.

In this vein, we show an interesting relationship between our complete information PNE and the equilibrium under such uncertainty. More formally, consider a variant of our model where each agent $i$ does not exactly know the true features $\bX_{-i}$ of the remaining agents, but instead has an estimate $\hat{\bX}_{-i}$ about the features of the remaining agents, with $\bX_{-i} = \hat{\bX}_{-i} + \bb_i$, with $\bb_i$ capturing the bias/error of the estimate\footnote{For simplicity, it is easier to think of $\hat{\bX}_{-i}$ and $\bX_{-i}$ as vectors of size $(k-1)d$.}. Then each agent computes their utility with respect to this possibly biased estimate and evaluates joint strategy based on this. Formally:\\
\begin{equation}\label{eq:biased_utility}
    u_i(\bx_i, \bx'_i, \hat{\bX}_{-i}, \bX'_{-i}, \fo) = f_{\bomega}(\bx'_i)\gp(\bx_i) - c(\bx_i, \bx'_i) - T(\bx_i, \bx'_i, \hat{\bX}_{-i}, \bX_{-i}')]
\end{equation}\\
We note that only the externality term is affected by the uncertainty. We now show that at the complete information PNE $\bX^q = (\bx^q_1, \dots, \bx^q_k)$ strategy, no agent can improve their biased/imperfect utility by more than $\varepsilon$. Moreover this $\varepsilon$ linearly depends on the bias/error $\bb$.

\begin{prop}
    At the complete information PNE strategy $\bX^q = (\bx^q_1, \dots, \bx^q_k)$, no agent believes they can increase their biased/imperfect utility (equation \ref{eq:biased_utility}) by more than $\varepsilon = 2\lambda||\bb_i||$, where $\lambda$ is the Lipschitz constant for the externality function.
\end{prop}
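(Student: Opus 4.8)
The plan is to exploit the structural observation, already noted in the text, that the true utility (equation~\ref{eq:utility}) and the biased utility (equation~\ref{eq:biased_utility}) differ \emph{only} in the externality term: the gain $\fo(\bx'_i)\gp(\bx_i)$ and the cost $c(\bx_i, \bx'_i)$ are untouched by agent $i$'s uncertainty about the others' true features. First I would fix an arbitrary agent $i$ and hold the reports of all other agents fixed at their complete-information PNE values $\bX^q_{-i}$. For any candidate report $\bx'_i \in [0,1]^d$, let $u_i^{\mathrm{true}}(\bx'_i)$ denote the true utility and $u_i^{\mathrm{bias}}(\bx'_i)$ the biased utility at that report. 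Subtracting, every term except the externality cancels, leaving
$$u_i^{\mathrm{true}}(\bx'_i) - u_i^{\mathrm{bias}}(\bx'_i) = T(\bx_i, \bx'_i, \hat{\bX}_{-i}, \bX^q_{-i}) - T(\bx_i, \bx'_i, \bX_{-i}, \bX^q_{-i}).$$
Because $T$ is $\lambda$-Lipschitz in the other agents' true-feature vector and $\bX_{-i} = \hat{\bX}_{-i} + \bb_i$, the right-hand side is bounded in absolute value by $\lambda||\bb_i||$, \emph{uniformly} over the choice of $\bx'_i$.

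The core step is then a standard add-and-subtract decomposition. For any deviation $\bx'_i$, write
$$u_i^{\mathrm{bias}}(\bx'_i) - u_i^{\mathrm{bias}}(\bx^q_i) = \big[u_i^{\mathrm{bias}}(\bx'_i) - u_i^{\mathrm{true}}(\bx'_i)\big] + \big[u_i^{\mathrm{true}}(\bx'_i) - u_i^{\mathrm{true}}(\bx^q_i)\big] + \big[u_i^{\mathrm{true}}(\bx^q_i) - u_i^{\mathrm{bias}}(\bx^q_i)\big].$$
The first and third bracketed quantities are each at most $\lambda||\bb_i||$ by the uniform Lipschitz bound above. The middle bracket is nonpositive: since $\bX^q$ is a complete-information PNE, $\bx^q_i$ is a best response under the \emph{true} utility, so $u_i^{\mathrm{true}}(\bx'_i) \le u_i^{\mathrm{true}}(\bx^q_i)$ for every $\bx'_i$. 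Summing the three bounds gives $u_i^{\mathrm{bias}}(\bx'_i) - u_i^{\mathrm{bias}}(\bx^q_i) \le 2\lambda||\bb_i||$, and taking the supremum over $\bx'_i$ yields the claimed $\varepsilon = 2\lambda||\bb_i||$.

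The one subtlety to handle with care is that the externality $T$ depends on agent $i$'s \emph{own} report $\bx'_i$ as well as on the others' features, so the two utilities must be compared at a \emph{common} report before any deviation is considered; the Lipschitz bound is therefore applied pointwise in $\bx'_i$, with the perturbation being solely the replacement of $\hat{\bX}_{-i}$ by $\bX_{-i}$, whose norm is exactly $||\bb_i||$. I expect no genuine obstacle beyond stating the Lipschitz hypothesis on $T$ precisely---as Lipschitzness in the stacked $(k-1)d$-dimensional true-feature vector of the other agents, consistent with the footnote's vectorized view---and confirming that the PNE best-response inequality is the only place where the equilibrium hypothesis enters.
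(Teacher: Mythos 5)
Your proposal is correct and follows essentially the same argument as the paper's proof: both rest on the uniform Lipschitz bound $|T(\bx_i, \bx'_i, \bX_{-i}, \bX'_{-i}) - T(\bx_i, \bx'_i, \hat{\bX}_{-i}, \bX'_{-i})| \leq \lambda\|\bb_i\|_2$ combined with the complete-information best-response inequality, with your three-term add-and-subtract decomposition being just a rewriting of the paper's chain of inequalities. No meaningful difference in approach or content.
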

\begin{proof}
    We recall this total externality function $T_i$ is smooth (twice differentiable) and since the domain of features is bounded, this function is also $\lambda$-lipschitz, for some $\lambda \geq 0$. Formally, for any fixed value of $\bx_i, \bx'_i, \bX'_{-i}$:
    \begin{equation}\label{eq:lip_uncertain}
        |T_i(\bx_i, \bx'_i, \bX_{-i}, \bX'_{-i}) - T_i(\bx_i, \bx'_i, \hat{\bX}_{-i}, \bX'_{-i})| \leq \lambda||\bX_{-i} - \hat{\bX}_{-i}||_2 \leq \lambda ||\bb_i||_2
    \end{equation}
    Next, suppose that $\bX^q = (\bx^q_1, \dots, \bx^q_k)$ is the complete information PNE strategy. Then by definition, the following holds for any $\bx'_i$:\\
    \begin{equation*}
        f_{\bomega}(\bx^q_i)\gp(\bx_i) - c(\bx_i, \bx^q_i) - T(\bx_i, \bx^q_i, \bX_{-i}, \bX_{-i}^q) \geq f_{\bomega}(\bx'_i)\gp(\bx_i) - c(\bx_i, \bx'_i) - T(\bx_i, \bx'_i, \bX_{-i}, \bX_{-i}^q)
    \end{equation*}\\
    Using the relations established in equations \ref{eq:lip_uncertain}, we have that:\\
    \begin{gather*}
        f_{\bomega}(\bx^q_i)\gp(\bx_i) - c(\bx_i, \bx^q_i) - T(\bx_i, \bx^q_i, \hat{\bX}_{-i}, \bX^q_{-i})] + \lambda||\bb_i||_2  \geq \\
         f_{\bomega}(\bx'_i)\gp(\bx_i) - c(\bx_i, \bx'_i) - T(\bx_i, \bx'_i, \hat{\bX}_{-i}, \bX'_{-i})] - \lambda||\bb_i||_2
    \end{gather*}\\
    The inequality above implies that at a PNE, agent $i$ cannot gain by more than $2\lambda||\bb_i||_2$ with respect to their biased/imperfect utility function.
\end{proof}

\section{Section \ref{sec:learning_generalization} Proofs}\label{app:B}
\paragraph{Proof of Lemma \ref{theorem:lipschitz_general}}
\begin{proof}
    We begin by proving the Lipschitz continuity of the maximizer of the following generic optimization problem:
    \begin{equation}\label{eq:general_phi}
        \bx^*(\bomega) = \argmax_{\bx' \in \X}{\Psi(\bx', \bomega)}
    \end{equation}
    where $\Psi$ is strictly concave in $\bx'$ for any $\bomega \in \Omega$ and $\X$ is convex. We know there is always a unique maximizer, and thus the optimal value $\bx^*(\bomega)$ is a well-defined function. We also note that our optimization problem satisfies the conditions of Berge's Maximum Theorem \citep{berge1877topological}. Thus, we can immediately conclude that the correspondence from any $\bomega$ to the set of maximizers is upper-hemicontinuous. Since our maximizer is unique - i.e. the set is a singleton - it suffices to observe that any set-valued map that is a singleton is upper-hemicontinous if and only if the corresponding function is continuous.
    
    Let $B_\varepsilon(\bomega)$ refers to an open ball of radium $\varepsilon$ centered at $\bomega$. For Lipschitz-ness, we first note that since $\Omega$ is bounded and compact, it suffices to prove this locally. That is, we wish to show that for any $\bomega$, there exists constants $L, \varepsilon_{\bomega} > 0$ such that for all $\xbar{\bomega} \in B_{\varepsilon_{\bomega}}$, $||\bx^*(\xbar{\bomega}) - \bx^*(\bomega)||_2 \leq L||\bar{\bomega} - \bomega||_2$. Indeed, this means that for any two $\bomega', \bomega'' \in \Omega$, we can consider a sequence of distinct point $\bomega_1=\bomega', \bomega_2, \dots, \bomega_{n-1}, \bomega_{n}=\bomega''$ lying on the line $\alpha \bomega' + (1-\alpha)\bomega''$ such that $||\bomega_{i} - \bomega_{i-1}||_2 \leq \varepsilon_{\bomega_{i-1}}$. Then we have the following, which means it suffices to focus on local Lipschitz-ness:
    \begin{equation*}
        ||\bx^*(\bomega'') - \bx^*(\bomega')||_2 \leq \sum_{i=1}^{n}{L||\bomega_i - \bomega_{i-1}||_2} = L||\bomega'' - \bomega'||_2
    \end{equation*}

    We next prove a property that holds for the maximizer of our problem under any $\bomega$. Fix a $\bomega$, and let $\bx^*$ be the maximizer. Next, consider a Taylor expansion of $\Psi$ at $\bx^*$. 
    One formulation of this presented in Chapter 2 of \citet{wright2022optimization} states for some $\gamma \in (0,1)$, we have (the dependence on $\bomega$ is dropped for now as it is unchanged):
    \begin{gather*}
        \Psi(\bx', \cdot) = \Psi(\bx^*, \cdot) + (\bx' - \bx^*)^T\nabla_{\bx'} \Psi(\bx^*, \cdot) + \frac{1}{2}(\bx' - \bx^*)^T\nabla_{\bx'}^2\Psi(\bx^* + \gamma(\bx' - \bx^*), \cdot)(\bx' - \bx^*) \\
        \implies \Psi(\bx', \cdot) \leq  \Psi(\bx^*, \cdot) + \frac{1}{2}(\bx' - \bx^*)^T\nabla_{\bx'}^2\Psi(\bx^* + \gamma(\bx' - \bx^*), \cdot)(\bx' - \bx^*)
    \end{gather*}
    where the second line follows from Lemma 2.7 in \citet{still2018lectures}, which states that at for any $\bx' \in \mathcal{X}$, $\nabla_{\bx'} \Psi(\bx^*, \bomega)\cdot(\bx' - \bx^*) \leq 0$\footnote{Technically the lemma is for convex functions with a $\geq$, but it can be easily massaged for concave functions}. We also note that $\nabla_{\bx'}^2 \Psi(\bx', \cdot)$ is the Hessian matrix which is (1) always symmetric and (2) consists of all strictly negative eigenvalues since $\Psi(\bx', \cdot)$ is always strictly concave. Since any symmetric matrix can be diagonalized as $Q \Lambda Q^T$, where $Q$ is the matrix of orthonormal eigenvectors and the $\Lambda$ the eigenvalues, we have that (define $\bp = \bx' - \bx^*$):
    \begin{equation*}
        (\bx' - \bx^*)^T\nabla_{\bx}^2\Psi(\bx^* + \gamma(\bx' - \bx^*), \cdot)(\bx' - \bx^*) = \bp^T Q \Lambda Q^T \bp = \sum_{i=1}^{n}{\lambda_i (\bv_i^T \bp)^2}
    \end{equation*}
    where $\bv_i$ is the $i^{th}$ eigenvector. Since we have strictly negative eigenvalues and $Q$ is an orthonormal matrix and thus does not affect the norm of vector it matrix multiplies, we have (where the $\lambda_{min}(\nabla_{\bx'}^2 \Psi)$ returns the minimum eigenvalue of the Hessian across $\bomega \in \Omega$ and $\bx' \in \mathcal{X}$):
    \begin{gather*}
         (\bx' - \bx^*)^T\nabla_{\bx'}^2\Psi(\bx^* + \gamma(\bx' - \bx^*), \cdot)(\bx' - \bx^*)
         \leq \lambda_{min}(\nabla_{\bx'}^2 \Psi) ||Q^T \bp ||_2^2 =\lambda_{min}(\nabla_{\bx'}^2 \Psi) ||\bp||_2^2 \\
         \implies \Psi(\bx', \cdot) \leq \Psi(\bx^*, \cdot) + \tfrac{1}{2}\lambda_{min}(\nabla_{\bx'}^2 \Psi) ||\bx' - \bx^*||_2^2
    \end{gather*}
    Since $\lambda_{min}$ must be strictly negative, it is evident that there is a strictly positive constant $c$, namely $c = \tfrac{1}{2}\lambda_{min}(\nabla_{\bx'}^2 \Psi) = \tfrac{1}{2}\min_{\bx', \bomega}(\lambda(\nabla^2_{\bx'}\Psi(\bx', \bomega)))$ such that:
    \begin{equation*}\label{eq:type_2_opt}
        \Psi(\bx^*, \bomega) - \Psi(\bx', \bomega) \geq c ||\bx' - \bx^*||_2^2
    \end{equation*}
    
    We now turn back to local Lipschitz-ness. Fix a $\xbar{\bomega} \in \Omega$ and denote the unique maximizer by $\bx^*(\xbar{\bomega}) = \xbar{\bx}$. From the relation derived above (Equation \ref{eq:type_2_opt}), there exist constants $\delta_1, c > 0$, such that:
    \begin{equation}\label{eq:lip_0}
        \Psi(\bbx, \xbar{\bomega}) - \Psi(\bx', \xbar{\bomega}) \geq c||\xbar{\bx} - \bx'||_2^2 \quad \forall \bx \in \mathcal{X}, ||\bx - \bbx||_2 \leq \delta_1
    \end{equation}

    Let $\Psi^* = \Psi(\xbar{\bomega}, \xbar{\bx})$ denote the optimal value at $\xbar{\bomega}$. For $\delta$ such that $0 < \delta < \delta_1$, let $2\alpha = \min_{\bx' \text{ s.t.} ||\bar{\bx} - \bx'|| = \delta}{\Psi(\bbx, \xbar{\bomega}) - \Psi(\bx', \xbar{\bomega})}$. Since $\xbar{\bx}$ is the global and strict optimizer at $\xbar{\bomega}$, it holds that $2\alpha > 0$. Therefore, we have that:
    \begin{equation}\label{eq:lip_1}
        \Psi(\bx', \xbar{\bomega}) \leq \Psi^* - 2\alpha \quad \forall \bx' \text{ such that } ||\bx' - \bar{\bx}||_2 = \delta
    \end{equation}

     We now leverage the continuity of $\Psi()$ with respect to $\bomega$ at $(\xbar{\bx}, \xbar{\bomega})$. For a chosen value $\frac{\alpha}{2}$, there must exist an $\varepsilon_1$ such that for all $\bomega \in B_{\varepsilon_1}(\bbt)$, we have that $|\Psi(\bbx, \bbt) - \Psi(\bbx, \bomega)| < \frac{\alpha}{2}$. Similarly, the continuity of $\Psi()$ with respect to $\bomega$ at $(\bbt, \bx')$ such that $||\bx' - \bbx||_2 = \delta$, implies that for any such $\bx'$ and chosen value $\frac{\alpha}{2}$, there exists $\varepsilon_{\bx'}$ such that for all $\bomega \in B_{\varepsilon_{\bx'}}(\bbt)$, we have that $|\Psi(\bx', \bbt) - \Psi(\bx', \bomega)| < \frac{\alpha}{2}$. Letting $\varepsilon = \min(\varepsilon_1, \min_{\bx' | \delta = ||\bx' - \bar{\bx}||}{\varepsilon_{\bx'}})$, the following holds:
     \begin{equation}\label{eq:lip_2}
         \forall \bomega \in B_{\varepsilon}(\bbt), \, \xbar{\bx}:\,\, \Psi(\bbx, \bomega) \in \left(\Psi^* - \frac{\alpha}{2}, \Psi^* + \frac{\alpha}{2} \right) \implies \Psi(\bbx, \bomega) \geq \Psi^* - \frac{\alpha}{2}
     \end{equation}
     \begin{equation}\label{eq:lip_3}
         \forall \bomega \in B_{\varepsilon}(\bbt)\,, \bx' \text{ s.t } ||\bx' - \bbx|| = \delta:\,\, \Psi(\bx', \bomega) \in \left(\Psi(\bx', \bbt) - \frac{\alpha}{2}, \Psi(\bx', \bbt) + \frac{\alpha}{2}\right)
     \end{equation}

    Combining Equations \ref{eq:lip_1} and \ref{eq:lip_3} we have that for any $\bomega \in B_{\varepsilon}(\bbt)$ and any $\bx'$ such that $||\bx' - \bbx||_2 \leq \delta$:
     \begin{equation}\label{eq:lip_4}
         \Psi(\bx', \bomega) < \Psi(\bx', \bbt) + \frac{\alpha}{2} \leq \Psi^* - 2\alpha + \frac{\alpha}{2} \leq \Psi^* - \alpha
     \end{equation}

    Fix a $\bto \in B_\varepsilon(\bbt)$. We aim to show Lipschitz continuity between $\bbt$ and $\bto$, and follow a similar structure as Theorem 6.2a in \cite{still2018lectures}. We know from Equation \ref{eq:lip_2} that $\Psi(\bbx, \bto) \geq \Psi^* - \frac{\alpha}{2}$. We also know from Equation \ref{eq:lip_4} that for any $\bx'$ on the boundary of the $B_\delta(\bbx)$ ball, $\Psi(\bx', \bto) \leq \Psi^* - \alpha$. Since $\Psi()$ is concave in $\bx'$, this implies that the $c = \Psi^* - \tfrac{\alpha}{2}$ super-level set  of $\Psi(\bx', \bto)$ (for a fixed $\bto$) must lie within this $B_{\delta}(\bbx)$ ball. In other words, the maximizer at $\bto$, $\bx^*(\bto) \triangleq \bxo$ satisfies $||\bxo - \bbx||_2 \leq \delta$. Noting that $\Psi(\bbx, \bto) - \Psi(\bxo, \bto) < 0$, we have that:
     \begin{gather*}
         \Psi(\bbx, \bbt) - \Psi(\bxo, \bbt) = \left[ \Psi(\bbx, \bbt) - \Psi(\bbx, \bto) \right] - \left[ \Psi(\bxo, \bbt) - \Psi(\bxo, \bto) \right] + \left[ \Psi(\bbx, \bto) - \Psi(\bxo, \bto) \right] \\
         \leq \underbrace{\left[ \Psi(\bbx, \bbt) - \Psi(\bbx, \bto) \right]}_{A} - \underbrace{\left[ \Psi(\bxo, \bbt) - \Psi(\bxo, \bto) \right]}_{B} \quad \quad \quad \quad
     \end{gather*}

    Consider next the function $\Psi(\bx', \bbt) - \Psi(\bx', \bto) \triangleq g(\bx')$. Consider $g(\bx')$ between $\bbx$ and $\bxo$. Indeed $g(\bbx) - g(\bxo) = A - B \geq \Psi(\bbx, \bbt) - \Psi(\bxo, \bbt)$. By the mean value theorem, there exists a $\alpha \in [0,1]$ such that:
    \begin{equation*}
        \Psi(\bbx, \bbt) - \Psi(\bxo, \bbt) \leq g(\bbx) - g(\bxo) = \nabla_{\bx}[\Psi(\bbx + \alpha(\bxo - \bbx), \bbt) - \Psi(\bbx + \alpha(\bxo - \bbx), \bto)]\cdot (\bbx - \bxo)
    \end{equation*}

    Recall that for a differentiable function $f(\bomega)$, we can approximate it near a value $\bbt$ using the gradient: $f(\bomega) = f(\bbt) + \nabla_{\bomega} f(\bbt) \cdot (\bbt  - \bomega) + o(||\bbt - \bomega||)$. So for any $\bx' \in cl(B_{\delta}(\bbx))$ ($cl$ denotes the closure), define $h(\bomega) = \nabla_{\bx}\Psi(\bx', \bomega)$. Thus, we have that:
    \begin{align*}
        h(\bto) &= h(\bbt) + \nabla_{\bomega} h(\bbt)(\bbt  - \bto) + o(\bm{1}||\bto - \bbt||_2) \\
        \implies \nabla_{\bx'}\Psi(\bx', \bto) &= \nabla_{\bx'}\Psi(\bx', \bbt) + \nabla^2_{\bomega \bx'}\Psi(\bx', \bbt)(\bbt - \bto) + o(\bm{1}||\bto - \bbt||_2) \\
        \implies \nabla_{\bx'}\left[\Psi(\bx', \bbt) - \Psi(\bx', \bto)\right] &= \nabla^2_{\bomega \bx'}\Psi(\bx', \bbt)(\bbt - \bto) + o(\bm{1}||\bto - \bbt||_2).
    \end{align*}
    

    Going back to $\Psi(\bbx, \bbt) - \Psi(\bxo, \bbt)$ we have the following:
    \begin{align*}
        \Psi(\bbx, \bbt) - \Psi(\bxo, \bbt) &\leq  \nabla_{\bx'}[\Psi(\bbx + \alpha(\bxo - \bbx), \bbt) - \Psi(\bbx + \alpha(\bxo - \bbx), \bto)]\cdot (\bbx - \bxo) \\
        &\leq \max_{\bx' \in cl B_\delta(\bar{\bx})}{\nabla_{\bx'}\left[\Psi(\bx', \bbt) - \Psi(\bx', \bto)\right]}\cdot (\bbx - \bxo) \\
      &  \leq \max_{\bx' \in cl B_\delta(\bar{\bx})}{(||\nabla^2_{\bomega \bx'}\Psi(\bx', \bbt)||_2 + 1)} \cdot ||\bbt - \bto||_2 \cdot ||\bbx - \bxo||_2.
    \end{align*}

    Let $\gamma = \max_{\bx', \bomega}{||\nabla^2_{\bomega \bx'}\Psi(\bx', \bomega)||_2} + 1$. Then using the relation outlined in Equation \ref{eq:lip_0}, and noting the fact that $\bxo \in B_{\delta}(\bbx)$, we have that:
    \begin{align*}
        c ||\bbx - \bxo||_2^2 \leq \Psi(\bbx, \bbt) - \Psi(\bxo, \bbt) &\leq \gamma ||\bbt - \bto||_2 \cdot ||\bbx - \bxo||_2 \\
        \implies ||\bbx - \bxo||_2 &\leq \frac{\gamma}{c} ||\bbt - \bto||_2
    \end{align*}

    To relate this to the Nash Equilibrium of an arbitrary $k$ participants, observe that the following is a characterization of the $\NE(\bx, \fo, k)$ function (again we treat $\bx$ as a $\reals^{d\kmax}$ dimensional vector): 
    \begin{equation*}
        \NE(\bx, f_\omega, k) = \argmax_{\bx' \in [0,1]^{d\kmax}}\left[{\Phi(\bx'_{0:k}, \bx_{0:k}, f_\omega ) - \sum_{j=k+1}^{\kmax}||\bx'_j - \bx_j||_2^2}\right]
    \end{equation*}
    The first term, the potential function, fits the function signature of $\Psi$ (see the objective in equation \ref{eq:general_phi}) since $\bx$ (the true features) is simply a constant. Second, we note that the optimization problem here is separable since the potential function only uses the first $k$ feature vectors and the sum of norms only involves the remaining $\kmax - k$ feature vectors. Further, the latter is independent of $\bomega$ and will always be set to $\bx'_j = \bx_j$ for $k < j \leq \kmax$ in the maximization program. Thus, we can appeal to the result above for the maximization of $\Phi$ and claim the function $\NE(\bx, \fo, k)$ to be $\tfrac{\gamma}{c}$ lipschitz continuous where $\gamma, c$ are as above but for the largest possible value of $k$, which is $\kmax$.
\end{proof}

\newpage

\paragraph{Proof of Theorem \ref{theorem:generalization}}
\begin{proof}

Let $\Ss = \{(k_1, \bX_1, \by_1), \dots, (k_n, \bX_n, \by_n)\}$ be the training set  where $\bX_i \in \reals^{\kmax \times d}$ and $\by_i \in \reals^{\kmax}$. Let $\exloss(\bX_i, \by_i, f_\omega, k_i)$ denote the loss on the $i$-th sample, where $\exloss(\bX_i, \by_i, f_\omega, k_i) = \frac{1}{k_i}\sum_{j=1}^{k_i}{ \ell(f_\omega(\NE(\bX_i, \fo, k_i)_{j:}),y_{ij})}$. Note that the following holds where the expectation is over samples from $\Df$:
\begin{equation}\label{eq:need_for_hoeffding}
    \hat{R}(\fo) = \frac{1}{n}\sum_{i=1}^{n}{\exloss(\bX_i, \by_i, f_\omega, k_i)} \quad \E[\hat{R}(\fo)] = \frac{1}{n}\sum_{i=1}^{n}\E[\exloss(\bX, \by, f_\omega, k)] = R(\fo)
\end{equation}

Our goal is to show the following generalization: $|R(\hat{\fo}) - R(\fo^*)| \geq \varepsilon$ with low probability. To that end, we use the fact that uniform convergence implies generalization. Formally:
\begin{equation}\label{eq:uniform_convergence}
    R(\hat{\fo}) - R(\fo^*) \leq |\Rhat(\hat{\fo}) - R(\hat{\fo})| + |\Rhat(\fo^*) - R(\fo^*)| \leq 2 \sup_{\fo \in \mathcal{F}}|\Rhat(\fo) - R(\fo)|
\end{equation}
where the second inequality follows from the triangle inequality and that $\Rhat(\hat{\fo}) - R(\hat{\fo}) \leq 0$. Since the function space is continuous, we will fix a $\zeta$-cover $\N^{\zeta}_{\bomega}$ of the parameter space $\Omega$. That is, for any $\bomega \in \Omega$, there exists a $\bomega' \in \Cvr$ such that $||\bomega - \bomega'|| \leq \zeta$. Since our parameter space is $d$ dimensional ball of norm $r$, it is well known that such a covering can be achieved with $|\Cvr| \leq \left(\frac{2r\sqrt{d}}{\zeta}\right)^d$. For a sample $i$ in our dataset, let $\bm{z}_i$ denote the vector of scores received. That is, $z_{ij} = \langle \NE(\bX_i, \fo, k_i)_{j:}, \bomega \rangle$, and let $z'_{ij}$ denote the score on this sample when classifier $f_{\bomega'}$ is used. Since our loss function $\ell$ is $\lambda$-Lipschitz in the score $z_{ij}$, we have that:

\begin{align*}
    |\exloss(\bX_i, \by_i, f_{\bomega}, k_i) -
      \exloss(\bX_i, \by_i, f_{\bomega'}, k_i)| = \left| \frac{1}{k_i} \sum_{j=1}^{k_i}{\ell(z_{ij}, y_{ij}) - \ell(z'_{ij}, y_{ij}) }\right| \leq \frac{\lambda}{k_i} ||\bm{z_i} - \bm{z'_i}||_1
\end{align*}

Next, by making use of the triangle inequality, we have the following:
\begin{align*}
     & \frac{\lambda}{k_i} ||\bm{z_i} - \bm{z'_i}||_1 \\
      & =\frac{\lambda}{k_i}\sum_{j=1}^{k_i}{   | \langle  \NE(\bX_i, \fo, k_i)_{j:}, \bomega \rangle- \langle \NE(\bX_i, \fop, k_i)_{j:}, \bomega'      \rangle   |}\\
       = & \frac{\lambda}{k_i}\sum_{j=1}^{k_i}{|  
       \langle  \NE(\bX_i, \fop, k_i)_{j:}, \bomega \rangle +
      \langle  \NE(\bX_i, \fo, k_i)_{j:}-\NE(\bX_i, \fop, k_i)_{j:}  , \bomega \rangle
    - \langle \NE(\bX_i, \fop, k_i)_{j:} , \bomega'      \rangle  | }\\
       \leq & \frac{\lambda}{k_i}\sum_{j=1}^{k_i}{ |  \langle  \NE(\bX_i, \fop, k_i)_{j:} , (\bomega-\bomega') \rangle| + 
      |\langle  \NE(\bX_i, \fo, k_i)_{j:}-\NE(\bX_i, \fop, k_i)_{j:} , \bomega \rangle | }\\
      \leq & \frac{\lambda}{k_i}\sum_{j=1}^{k_i}{ ||  \NE(\bX_i, \fop, k_i)_{j:} ||_1 \cdot 
      ||(\bomega-\bomega') ||_1} + \frac{\lambda}{k_i}\sum_{j=1}^{k_i}{
      |\langle  \NE(\bX_i, \fo, k_i)_{j:}-\NE(\bX_i, \fop, k_i)_{j:} , \bomega \rangle  |} \\
       \leq & 
       \lambda d  
      ||(\bomega-\bomega') ||_1 +
       \frac{\lambda}{k_i}{ 
      || (\NE(\bX_i, \fo, k_i)-\NE(\bX_i, \fop, k_i)) \bomega  ||_1} 
 \end{align*}
where the last inequality follows from the assumption that the feature vectors are bounded in the region $[0,1]^d$. Next, we make use of the following 3 relations that hold for any matrix $\bm{V}$ and vector $\bomega \in \reals^d$: (1) $||\bomega||_2 \leq ||\bomega||_1 \leq \sqrt{d}||\bomega||_2$, (2) submultiplicavity of matrix norms $||V \bomega||_2 \leq ||V|| ||\bomega||_2$ and (3) $||V||_2 \leq ||V||_F$, where $||V||_F$ denotes the matrix Frobenius norm:
\begin{align*}
    || (\NE(\bX_i, \fo, k_i)-\NE(\bX_i, \fop, k_i)) \bomega  ||_1 \leq & \sqrt{k_i} || (\NE(\bX_i, \fo, k_i)-\NE(\bX_i, \fop, k_i)) \bomega  ||_1\\
    \leq & \sqrt{k_i} || (\NE(\bX_i, \fo, k_i)-\NE(\bX_i, \fop, k_i)) ||_F ||\bomega||_2 \\
    \leq & r\sqrt{k_i} || (\NE(\bX_i, \fo, k_i)-\NE(\bX_i, \fop, k_i)) ||_F
\end{align*}
Next, we appeal to theorem \ref{theorem:lipschitz_general} which establishes the $\eta-$Lipschitz continuity of the $\NE$ function in $\bomega$ under the $||\cdot||_F$ norm to state the following:
\begin{align*}
    r\sqrt{k_i} || (\NE(\bX_i, \fo, k_i)-\NE(\bX_i, \fop, k_i)) ||_F \leq & \eta r \sqrt{k_i}||\bomega - \bomega'||_1
\end{align*}
Thus in combining the results here, we can state the following, where the last inequality follows from the definition of a $\gamma$ covering:
\begin{equation*}
     |\exloss(\bX_i, \by_i, f_\omega, k_i) -
      \exloss(\bX_i, \by_i, f_{\omega'}, k_i)| \leq \lambda d ||\bomega - \bomega'||_1 + \frac{\lambda}{k_i} \eta r \sqrt{k_i} ||\bomega - \bomega'||_1 \leq \lambda(d + \eta r)\gamma
\end{equation*}

This essentially bounds the change in empirical and true risk due to using classifier parameters from using parameters in the finite covering $\Cvr$. Formally, we expand Equation \ref{eq:uniform_convergence} to state the following ($\bomega'$ is the closest element in $\Cvr$ to $\bomega$):
\begin{align*}
         \sup_{\bomega \in \Omega}|R(\fo) - \hat{R}(\fo)| = & \sup_{\bomega \in \Omega}| 
         R(f_{\bomega'})- \hat{R}(f_{\bomega'}) +
         R(\fo) - R(f_{\bomega'}) + \hat{R}(f_{\bomega'}) - \hat{R}(\fo) |  \\ 
        \leq & \max_{\bomega' \in \Cvr}|R(f_{\bomega'}) - \hat{R}(f_{\bomega'})| + 2\lambda(d + \eta r)\gamma.
\end{align*}

By choosing  $\zeta = \frac{\varepsilon}{8  \lambda(d + \eta r)\gamma}$,  we get that
    \begin{equation*}
        \Pb\left(\sup_{\bomega \in \Omega}|R(\fo) - \hat{R}(\fo)| \geq \tfrac{\varepsilon}{2}\right) \leq \Pb\left(\max_{\bomega' \in \Cvr}|R(f_{\bomega'}) - \hat{R}(f_{\bomega'})| \geq \tfrac{\varepsilon}{4}\right)
    \end{equation*}

    Lastly, we can apply Hoeffding's inequality due to Equation \ref{eq:need_for_hoeffding} and using union bound over $\N^{\delta}_{\omega}$ at the right-hand side of the inequality above, we have that 
    \begin{align*}
        \Pb\left(\max_{\bomega' \in \Cvr}|R(f_{\bomega'}) - \hat{R}(f_{\bomega'})|| \geq \tfrac{\varepsilon}{4}\right) &\leq 2|\Cvr|\exp{\left(\frac{-n\varepsilon^2}{8}\right)} \\
       & \leq 2\left(\frac{16  \lambda(d + \eta r)\gamma  }{\varepsilon}\right)^d \exp{\left(\frac{-n\varepsilon^2}{8}\right)}
    \end{align*}
and the theorem follows by choosing $n$ accordingly.
\end{proof}

\paragraph{Proof of Theorem \ref{theorem:gradient}}
\begin{proof}
    Note that $\NE$ is the solution to a strictly convex optimization problem. Let $z = kd$. We express the Lagrangian of this problem as follows, with the $\bm{\lambda} \in \reals^{z}$ denoting the dual variables for the upper bound constraint and $\bm{\mu} \in \reals^{z}$, the duals for the lower bound (recall our feasible region is a box $[0,1]^{z})$:
    \begin{equation*}
        \mathcal{L}(\bx', \blambda, \bmu, \bomega; \bx) = \Phi(\bx', \bx, \fo) - \sum_{i=1}^{z}{\lambda_i(x'_i - 1)} + \sum_{i=1}^{z}{\mu_i x'_i}
    \end{equation*}
    
    Since our feasible region is always strictly satisfiable, Slater's condition is always satisfied. When our objective is concave, as modeled throughout the paper, the KKT conditions are both sufficient and necessary for the optimal solution $\bx^* = \NE(\bx, \bomega)$. Since the constraints are affine, in non-concave setting, the KKT conditions are necessary at a local optimum. Define vectors $\bs_{x+} \in \reals^{z}, \bs_{x-} \in \reals^{z}, \bs_{\lambda} \in \reals^{z}, \bs_{\mu} \in \reals^{z}$ which will be used to denote the primal and dual slacks. We now define the following implicit function $G(\cdot): \reals^{7z \times d} \rightarrow \reals^{7z}$:
    \begin{equation*}
        G(\bx', \blambda, \bmu, \bs_{x+}, \bs_{x-}, \bs_{\lambda}, \bs_{\mu}, \bomega) = \begin{bmatrix}
           \nabla_{\bx'}{\mathcal{L}(\bx', \blambda, \bmu, \bomega; \bx)} \\
           x'_i - 1 + {s^2_{x+,i}} \,\,\, \forall i \in [m] \\
           x'_i - {s^2_{x-,i}} \,\,\, \forall i \in [m] \\
           \text{diag}(\blambda)(1 - \bx') \\
            \text{diag}(\bmu)(\bx') \\
            \lambda_i - s^2_{\lambda, i} \,\,\, \forall i \in [m]\\
            \mu_i - s^2_{\mu, i} \,\,\, \forall i \in [m]
         \end{bmatrix}
    \end{equation*}
     Let $G(\cdot) = \bm{0}$. Under this implicit equation, the first row of equations corresponds to the KKT stationarity conditions. The second two rows of equations enforce each $x'_i$ is less than $1$ and greater than $0$ respectively - the KKT primal feasibility conditions. The fourth and fifth rows of equations correspond to the complementary slack constraints. The last two rows of equations enforce that the dual variables are positive, the KKT dual feasibility condition. Thus, when $G(\cdot) = \bm{0}$, it means the KKT conditions are satisfied, and with a concave objective this also implies the solution is optimal. Similarly, for any optimal $\bX^*, \blambda^*, \bmu^*$, since it is feasible, we can always find the corresponding slacks so as to satisfy $G(\cdot) = 0$. Therefore, solutions to this implicit equation fully characterizes the optimal solution.

     Let $\bp = (\bx', \blambda, \bmu, \bs_x, \bs_\lambda, \bs_\mu)$ and we can simplify our equation to $G(\bp, \bomega) = 0$. Any $\bp$ that satisfies this is the optimal solution for $\bomega$. We aim to use the Implicit Function Theorem and to that end, we first compute the Jacobian $\J_{\bz}(G)$ as follows (the columns represent the derivative with respect to $\bx', \blambda, \bmu, \bs_{x+}, \bs_{x-}, \bs_{\lambda}, \bs_{\mu}$): 
    \begin{align*}
        \J_{\bz}(G) 
        = \begin{bmatrix}
           \nabla^2_{\bx'}{\Phi(\bx', \bx, \fo)} & \bm{-I} & \bm{I} & \bm{0} & \bm{0} & \bm{0} & \bm{0}\\
            \bm{I} & \bm{0} & \bm{0} & \text{diag}(2)\bs_{x+} & \bm{0} & \bm{0} & \bm{0}\\
            \bm{I} & \bm{0} & \bm{0} & \bm{0} & \text{diag}(-2)\bs_{x-} & \bm{0} & \bm{0}\\
            \text{diag}(\blambda) & \bm{I}\bx & \bm{0} & \bm{0} & \bm{0} & \bm{0} & \bm{0}\\ 
            -\text{diag}(\bmu) & \bm{0} & -\bm{I}\bx & \bm{0} & \bm{0} & \bm{0} & \bm{0}\\
            \bm{0} & \bm{I} & \bm{0} & \bm{0} & \bm{0} & \text{diag}(-2)\bm{s}_{\lambda} & \bm{0}\\
            \bm{0} & \bm{0} & \bm{I} & \bm{0} & \bm{0} & \bm{0} & \text{diag}(-2)\bm{s}_{\mu}\\
         \end{bmatrix}
    \end{align*}

    We note that the first $m$ columns are always linearly independent owing to the block of identity matrices in the second and third rows of block matrices. Similarly, the second and third sets of $m$ columns are linearly independent owing to the identity in the first row. The remaining columns correspond to slack variables, which we now address. A constraint $i$ is active if $x^*_i \in \{0, 1\}$. Under the KKT conditions, the Lagrange multiplier for inactive constraints must be 0. Let $\I \subseteq [2m]$ denote a set of active constraints. Let $\I(\bomega) = \left\{i | x^*_i(\bomega) = 0 \right\} \cup \left\{i + m | x^*_i(\bomega) = 1 \right\}$. We define the inverse mapping $\bomega(\I)$ as follows: $\bomega(\I) = \{\bomega \in \bm{\Omega} | I(\bomega) = \I\}$. It is immediate that the set $\{\bomega(\I) \,\, \forall \I \in [2m]\}$ is a partition of the parameter space $\bm{\Omega}$.

    Fix an $\I$ and without loss of generality, let $i$ constraints from $\blambda$ and $j$ constraints from $\bmu$ be active. Then for $\bomega \in \bomega(\I)$, it suffices to consider $\blambda_i \in \reals^i$ and $\bmu_j \in \reals^j$ - ie the dual variables corresponding to the active constraints - since the others will be 0 under KKT complementary slackness condition. Similarly, we need only consider the slacks for the inactive constraints since those are the only ones free. Thus for $\bomega \in \bomega(\I)$ we can simplify the general function $G$ to a function $G': \reals^{3m + i + j} \times \reals^{d} \rightarrow \reals^{3m + i +j}$ by focusing only on the inactive primal slacks and the active dual variables and their corresponding slack.

    We note that by definition, the primal slack for inactive constraints cannot be 0 by definition (otherwise those constraints would be active). Hence the columns corresponding to those are also independent. Thus, if the dual slacks for the active constraints are non-zero, then the Jacobian for $G'$ has full rank. This allows us to apply the implicit function theorem and state the following:
    \begin{equation*}
        \J_{\bomega}(\NE(\bx, \fo)) \triangleq \frac{\partial \NE(\bx, \fo)}{\partial \bomega} = \J^{-1}_{\bp}(G'(\bp, \bomega)) \J_{\bomega}(G'(\bp, \bomega))
    \end{equation*}
    
    If however some of these dual slacks are zero, then it suggests that some of these constraints are redundant and the Jacobian of $G'$ may not be invertible. This is formally equivalent to:
    \begin{equation*}\label{eq:no_diff_condition}
        \exists \, i \, | \left(x^*_i = 1 \wedge \lambda_i^* = 0\right) \vee \left(x^*_i = 0 \wedge \mu_i^* = 0\right)
    \end{equation*}
    Indeed, this is a degenerate situation known as strict complementary failure~\citep{nocedal1999numerical} and represents the threshold or boundary of the $w(\I)$ region where one set of constraints is becoming active and another inactive. 
\end{proof}

\newpage
\section{Section \ref{sec:cost_and_externality} Proofs}\label{app:C}
\paragraph{Proof of Proposition \ref{prop:proportional_externality}}
\begin{proof}
    We express the cumulative impact of manipulation under the stated conditions as follows:
    \begin{align*}
      &  \alpha \sum_{i=1}^{k}\sum_{\ell=1}^{d}(x'_{i\ell} - x_{i\ell})^2 + \tfrac{\beta}{k-1} \sum_{i=1}^{k}\sum_{j>i}\sum_{\ell=1}^{d} (x'_{i\ell} - x_{i\ell})^2(x'_{j\ell} - x_{j\ell})^2 \\
        &= \sum_{\ell=1}^{d}\left[\alpha \sum_{i=1}^{n}(x'_{i\ell} - x_{i\ell})^2 + \tfrac{\beta}{k-1} \sum_{i=1}^{k}\sum_{j>i} (x'_{i\ell} - x_{i\ell})^2(x'_{j\ell} - x_{j\ell})^2\right]
    \end{align*}
    Since the sum of strictly convex functions is convex, it suffices to show that each inner summand is strictly convex. For a fixed $\ell$, we observe that since since there are exactly $\tfrac{k(k-1)}{2}$ pairs satisfying $j > i$, and in each feature $\bx_i$ appears in exactly $(k-1)$ of these pairs, the inner summand can be written as follows:
    \begin{equation}\label{eq:proportional_summand}
        \forall \ell: \sum_{i=1}^{k}\sum_{j > i}{\tfrac{\alpha}{k-1}(x'_{il} - x_{il})^2 + \tfrac{\beta}{k-1}  (x'_{i\ell} - x_{i\ell})^2(x'_{j\ell} - x_{j\ell})^2 + \tfrac{\alpha}{k-1}(x'_{jl} - x_{jl})^2}
    \end{equation}
     Again since convexity is preserved in summation, we only need to show strong convexity with respect to the summands, each of whom is a function of two variables ($x'_{i\ell}$ and $x'_{j\ell}$ since $\bx_i$ and $\bx_j$ are constants). For an arbitrary summand index by $(i,j)$, let $u_i = (x'_{i\ell} - x_{i\ell})$ and $u_j = (x'_{j\ell} - x_{j\ell})$. Then the Hessian (upon multiplying Equation \ref{eq:proportional_summand} by $k-1$, which does not affect convexity), is:
    \begin{equation*}
        \nabla^2_{x'_{i\ell}, x'_{j\ell}} = 
        \begin{bmatrix}
            2\alpha  + 2\beta u_j^2 & 4\beta u_i u_j\\
            4\beta u_i u_j & 2\alpha + 2\beta u_i^2.
        \end{bmatrix}
    \end{equation*}
    The determinant of this Hessian, when simplified, is given by:
    \begin{equation*}
        \det(\nabla^2_{x'_{i\ell}, x'_{j\ell}}) = 4\alpha^2 + 4\alpha\beta u_i^2 + 4 \alpha \beta u_j^2 - 12 \beta^2u_i^2 u_j^2
    \end{equation*}
    Since we wish to show the determinant is strictly positive, $\alpha^2 + \alpha\beta u_i^2 + \alpha \beta u_j^2 > 3\beta^2 u_i^2 u_j^2$. As feature vectors are bounded, $u_i, u_j \in [-1, 1]$, and our condition is $\alpha > \beta$, the following holds:
    \begin{align*}
        \alpha^2 + \alpha \beta u_i^2 + \alpha \beta u_j^2 & > \beta^2 + \beta^2 u_i^2 + \beta^2 u_j^2 
        \geq \beta^2 u_i^2 u_j^2 + \beta^2 u_i^2 + \beta^2 u_j^2 \\
        &\geq \beta^2 u_i^2 u_j^2 + 2 \beta^2 u_i^2 u_j^2 = 3\beta^2 u_i^2 u_j^2,
    \end{align*}
    where the second last transition uses the fact that $2u_i^2u_j^2 \leq u_i^2 + u_j^2$ in the feature vector range.
\end{proof}

\paragraph{Proof of Proposition \ref{prop:congestion_externality}}
\begin{proof}
    We express the cumulative impact of manipulation under the stated conditions as follows:
    \begin{align*}\label{eq:conges_ext}
      &  \alpha \sum_{i=1}^{k}\sum_{\ell=1}^{d}(x'_{i\ell} - x_{i\ell})^2 + \tfrac{\beta}{k-1} \sum_{i=1}^{k}\sum_{j>i}\sum_{\ell=1}^{d}\exp(-(x'_{i\ell} - x'_{j\ell})^2)\\
       & = \sum_{\ell=1}^{d}\left[\alpha \sum_{i=1}^{n}(x'_{i\ell} - x_{i\ell})^2 + \tfrac{\beta}{k-1} \sum_{i=1}^{k}\sum_{j>i}\exp(-(x'_{i\ell} - x'_{j\ell})^2)\right]
    \end{align*}
    Since the sum of strictly convex functions is convex, it suffices to show that each inner summand is strictly convex. For a fixed $\ell$, we observe that since since there are exactly $\tfrac{k(k-1)}{2}$ pairs satisfying $j > i$, and in each feature $\bx_i$ appears in exactly $(k-1)$ of these pairs, the inner summand can be written as follows:
    \begin{equation}\label{eq:congestion_summand}
        \forall \ell: \sum_{i=1}^{k}\sum_{j > i}{\tfrac{\alpha}{k-1}(x'_{il} - x_{il})^2 + \tfrac{\beta}{k-1} \exp(-(x'_{i\ell} - x'_{j\ell})^2) + \tfrac{\alpha}{k-1}(x'_{jl} - x_{jl})^2}.
    \end{equation}
    Again, since convexity is preserved in summation, we only need to show strong convexity with respect to the summands, each of whom is a function of two variables ($x'_{i\ell}$ and $x'_{j\ell}$). For an arbitrary summand index by $(i,j)$, let $u = (x'_{i\ell} - x'_{j\ell})$. Then the Hessian (upon multiplying Equation \ref{eq:congestion_summand} by $k-1$, which does not affect convexity) is given by:
    \begin{equation*}
        \nabla^2_{x'_{i\ell}, x'_{j\ell}} = 
        \begin{bmatrix}
            2\alpha  + 2\beta e^{-u^2}[2u^2 - 1] & 2\beta e^{-u^2}[1 - 2u^2]\\
            2\beta e^{-u^2}[1 - 2u^2] & 2\alpha + 2\beta e^{-u^2}[2u^2 + 1].
        \end{bmatrix}
    \end{equation*}
    A positive definite matrix is equivalent to a matrix with all positive principal minors. When $\beta < \tfrac{\alpha}{\sqrt{2}} < \alpha$, we note the the $(0,0)$ index (and thus the first principal minor) is positive. The second principal minor is the determinant, which for our $2 \times 2$ matrix above is given by:
    \begin{equation*}
        \det(\nabla^2_{x'_{i\ell}, x'_{j\ell}}) = 4\alpha^2 + 16\alpha\beta e^{-u^2}u^2 + 4\beta^2 e^{-2u^2}[4u^2 - 2].
    \end{equation*}
    Note that the middle term is always positive. The last term is the most negative when $u = 0$, which results in it being $-8\beta^2$. It is evident that can be well compensated for by the first term since using our relation between $\beta$ and $\alpha$, we have that: $8\beta^2 < 8\left(\frac{\alpha}{\sqrt{2}}\right)^2 = 4 \alpha^2$.
\end{proof}

\section{Additional Experiments}\label{app:appendix_exp}
\begin{figure}[H]
    \centering
    \includegraphics[width=1.0\linewidth]{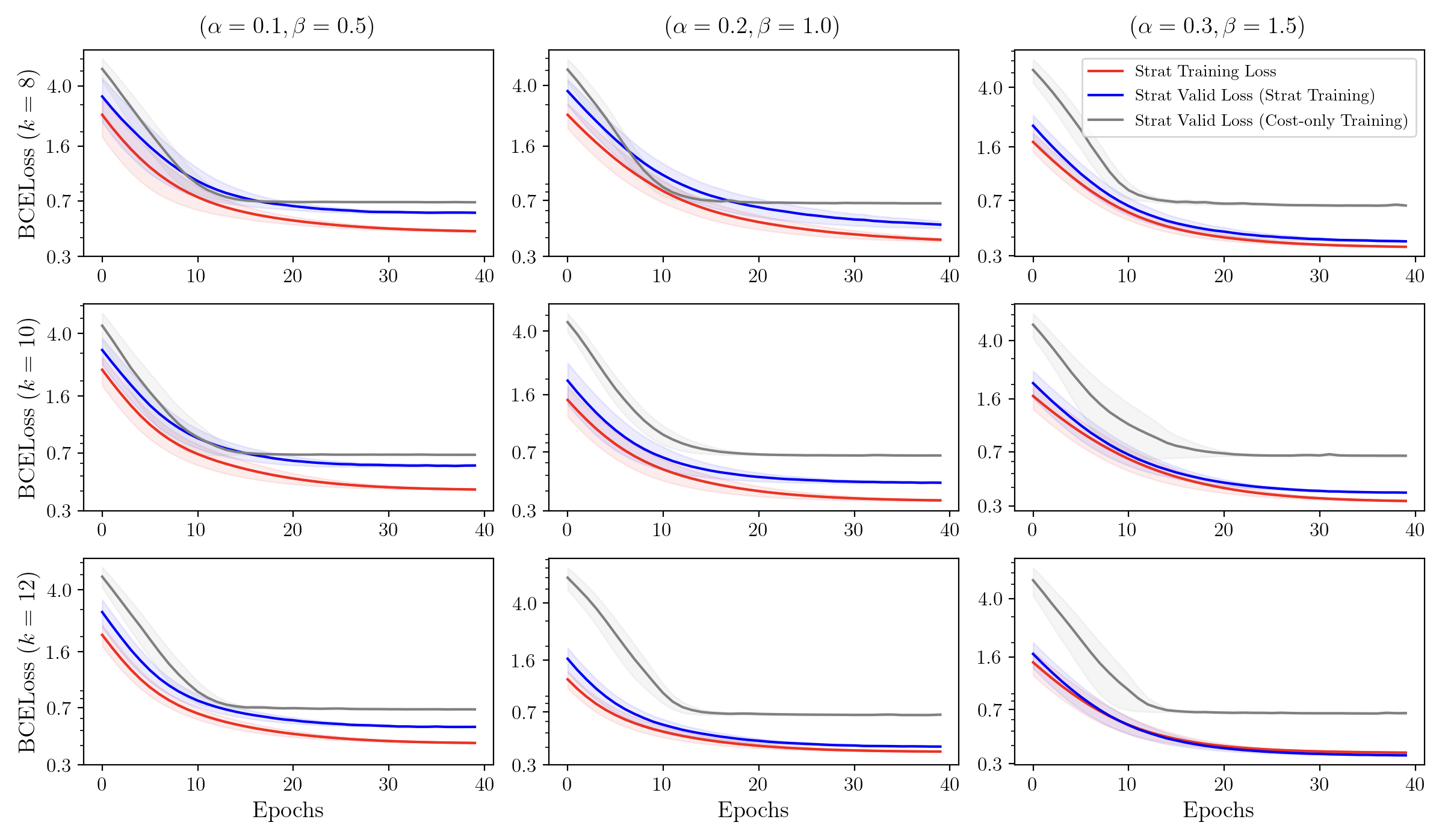}
    \caption{Strategic training (red), corresponding strategic validation losses (blue), and strategic validation loss of a classifier trained only considering cost (gray). We plot the mean losses over 15 random datasets (with 90\% confidence interval) vs training epochs.}
    \label{fig:strat_train_baseline}
\end{figure}

We also consider comparing our experimental results against a \emph{cost-only} baseline. During the training of this classifier, the agents manipulate their features only considering the cost (and not the externality). In this setting, agents have a dominant strategy since the cost does not depend on the reports of the others. The grey curve plots the strategic validation loss of this classifier, wherein agents manipulate considering both cost and externality and reach a PNE. This baseline is essentially trying to capture the performance of a classifier trained in the classical model of \citet{hardt2016strategic} (which considers only cost) when deployed in settings with externality.

We first note that the cost-only baseline performs worse than the classifier trained with both cost and externality considerations. This is naturally expected since at validation manipulation occurs considering both. That said, this baseline performs relatively better when the intensity of the cost and externality are low. This is intuitive since at low intensities, the agents are relatively free to misreport to increase gain, a dynamic captured by both classifiers. As intensities of both cost and externality increase, the cost-only classifier only captures part of the story, ignoring the relatively large negative impacts of manipulation arising from externality. This likely leads the cost-only trained model to anticipate more extreme misreporting than what happens in practice, leading to relatively worse validation performance at these higher intensities. Lastly, we note that $k$ does not seem to have any meaningful effect on the performance of this cost-only baseline.

\end{document}